\newtheorem{definition}{Definition}
\newtheorem{proposition}[definition]{Proposition}
\newcommand{\be}{\begin{equation}}
\newcommand{\ee}{\end{equation}}
\newcommand{\bse}{\begin{subequations}}
\newcommand{\ese}{\end{subequations}}
\newcommand{\ket}[1]{\left|{#1}\right\rangle}
\newcommand{\Z}{\mathbb{Z}}
\newcommand{\inv}[1]{{{#1}^{-1}}}
\newcommand{\ii}{\mathrm{i}}
\newcommand{\U}{\mathcal{U}}
\newcommand{\C}{\mathcal{C}}
\newcommand{\scC}{\mathcal{C}}
\newcommand{\scD}{\mathcal{D}}
\newcommand{\scS}{\mathcal{S}}
\newcommand{\bpm}{\begin{pmatrix}}
\newcommand{\epm}{\end{pmatrix}}
\newcommand{\bmm}{\begin{matrix}}
\newcommand{\emm}{\end{matrix}}
\newcommand{\rarr}{\rightarrow}
\newcommand{\mcirc}{\,\circ\,}
\newcommand{\ox}{\otimes}
\newcommand{\aox}{{\,\ox\!_A\,}}
\newcommand{\isoto}{\xrightarrow{\sim}}
\newcommand{\rep}{\mathrm{Rep}}
\newcommand{\Hom}{\mathrm{Hom}}
\newcommand{\End}{\mathrm{End}}
\newcommand{\id}{\mathrm{id}}
\newcommand{\ginv}[1]{\overline{#1}}
\newcommand{\dimc}{{\dim_\C}}
\newcommand{\PFdimc}{\mathrm{PFdim}_{\mathcal{C}}}
\newcommand{\dima}{{\dim_A}}
\newcommand{\obj}[2]{{#1\!\left|_{#2}\right.}}
\newcommand{\grad}{\mathrm{\mathbf{gr}}}
\newcommand{\Rep}{\mathrm{Rep}}
\newcommand{\Aut}{\mathrm{Aut}}
\newcommand{\unit}{\boldsymbol{1}}
\newcommand{\qdbar}[2]{#1\overline{#2}}
\newcommand{\qdij}[1]{i_{#1}\overline{j_{#1}}}
\newcommand{\qab}[2]{\mathrm{A}^{#1}\mathrm{B}^{#2}}
\newcommand{\qba}[2]{\mathrm{B}^{#1}\mathrm{A}^{#2}}
\newcommand{\qdA}{\mathrm{A}}
\newcommand{\qdB}{\mathrm{B}}
\newcommand{\kab}[2]{\ket{\mathrm{A}^{#1}\mathrm{B}^{#2}}}
\newcommand{\skab}[2]{|{\mathrm{A}^{#1}\mathrm{B}^{#2}}\rangle}
\newcolumntype{C}{>{\centering\arraybackslash} m{1.5em} }
\newcommand*{\Relbarfill@}{\arrowfill@\Relbar\Relbar\Relbar}
\newcommand*{\xeq}[2][]{\ext@arrow 0055\Relbarfill@{#1}{#2}}
\title{Anyon Condensation: Coherent states, Symmetry Enriched Topological Phases, Goldstone Theorem, and Dynamical Rearrangement of Symmetry}
\date{\today}
\author[a,1]{Yuting Hu\note{Corresponding author}}
\author[a]{Zichang Huang}
\author[a,b]{Ling-yan Hung}
\author[a,b,1]{Yidun Wan}
\affiliation[a]{State Key Laboratory of Surface Physics,Department of Physics, Center for Field Theory and Particle Physics, and Institute for Nanoelectronic devices and Quantum computing, Fudan University, Shanghai 200433, China}
\affiliation[b]{Shanghai Qi Zhi Institute, Shanghai 200030, China}
\emailAdd{yuting.phys@gmail.com,hzc881126@hotmail.com, elektron.janethung@gmail.com, ydwan@fudan.edu.cn}
\abstract{
Although the mathematics of anyon condensation in topological phases has been studied intensively in recent years, a proof of its physical existence is tantamount to constructing an effective Hamiltonian theory. In this paper, we concretely establish the physical foundation of anyon condensation by building the effective Hamiltonian and the Hilbert space, in which we explicitly construct the vacuum of the condensed phase as the coherent states that are the eigenstates of the creation operators that create the condensate anyons. Along with this construction, which is analogous to Laughlin's construction of wavefunctions of fractional quantum hall states, we generalize the Goldstone theorem in the usual spontaneous symmetry breaking paradigm to the case of anyon condensation. We then prove that the condensed phase is a symmetry enriched (protected) topological phase by directly constructing the corresponding symmetry transformations, which can be considered as a generalization of the Bogoliubov transformation. 
}
\begin{document}

\maketitle
\flushbottom

\section{Introduction}

Anyon condensation has a ubiquitous presence in the study of topological order, particularly in 2+1 dimensions . It characterizes topological boundary conditions and defects between phases \cite{Bais2009a,Kitaev2012,Barkeshli,Barkeshli2013b,HungWan2014,HungWan2015a,Cong2017a,Wan2017}. It also provides mathematical maps (functors) between phases, giving new insights to the mathematical structures of, and relations between topological orders. Notably, the parent phase---the phase before anyon condensation---and its child phase---the condensed phase---also encapsulate data of a symmetry enriched topological (SET) \cite{maciejko2010,swingle2011,levinstern2012,Mesaros2011,Hung2013,Barkeshli2014c,Gu2014a}  or a symmetry protected topological (SPT) phase  \cite{affleck1987,Chen2011f}. The mathematical framework describing anyon condensation has been provided in \cite{Kitaev2012,Kong2013}. Yet the concept is often regarded as being mysterious particularly to physicists, since the mathematical description based on Frobenius algebra $A$ (describing the condensate) in a modular tensor category $\mathcal{C}$ and its representation (describing excitations in the condensate) do not appear to be directly related to states in a Hilbert space---a fundamental physical ingredient that would convince a physicist that the phenomenon exists. An explicit construction in the current context is comparable to the construction of the Laughlin wave-function that illustrates the existence of the fractional quantum hall state \cite{Laughlin1983}. The wave-function also lays out the possibility, if not a concrete path, towards realizing or simulating it in an experiment. 

Moreover, the notion of `condensation' would be immediately associated with spontaneous symmetry breaking in a physicist's mind. The analogy has been invoked in many discussions of anyon condensation. Nevertheless, if the analogy is more than just an analogy, it would be necessary to understand in what sense the new ground state is a coherent state of the condensate, and that one would expect that the Goldstone theorem to have a counterpart in these topological condensates.

In this paper, we would like to address these questions.  Along the way, our techniques will also find applications in understanding the symmetry enriched/protected topological (SET/SPT) phases through the lens of anyon condensation, allowing us to systematically construct symmetry action on explicit states in the Hilbert space of the child phase, completing the program instigated in \cite{HungWan2013a,Hung2013,Gu2014a}.

The key results of this paper are:
\begin{enumerate}
	\item  We build the explicit Hilbert space and construct creation operators that create the condensate anyons. Then we explicitly construct the vacua of the condensed phase as  the coherent states that are eigenvectors of the creation operators that create condensate anyons.
	\item We classify the inequivalent ground states that emerge for a given condensate in a twisted quantum double model (Dijkgraaf-Witten discrete gauge theory) based on the Tannaka duality. This is a generalization of the Goldstone theorem in the context of anyon condensation in topological orders, where the condensate could be magnetic/dyonic excitations with respect to the gauge group, unlike in the usual context in which the Goldstone theorem or the Higgs mechanism applies, the condensate are purely electric excitations. 
	\item We prove by construction that the condensed phase is a symmetry enriched (protected) topological phase, where the symmetry is a hidden global symmetry, described by a finite group $G$, and construct the corresponding symmetry transformations. The group $G$ is constructed from $A$ using Tannaka reconstruction theorem. Hence, there are $|G|$ sectors respectively labeled by the $G$ elements. The condensed phase lives in one sector---the physical sector. 
	\item The global $G$ transformation is decomposed into two actions: one mixes the anyons of the same energy within one sector and is characterized by a $G$-action on $\Rep A$, while the other maps one sector to another. This is a generalization of the Bogoliubov transformation for coherent states in the context of anyon condensation. 
	
\end{enumerate}

We shall briefly explain these key result in Section \ref{sec:exposition} before we derive them in details. We supply explicit examples to the results above. In the case of charge condensation and the case of dyon condensation in twisted quantum double models of arbitrary gauge group $G$ we supply a complete solution to the above construction.

If certain anyons in $\U$ are in a more general representations $\rho'$, they are organized into doublets or multiplets of $G$, that is, these anyons can be exchanged or permuted. When a physical boundary exists, an SET phase has stable gapless edge modes protected by the global symmetry.

This paper is organized as follows. Section \ref{sec:exposition} briefly explains our key results by exposing the relevant physics rather than mathematical details and presents a dictionary between the concepts in superconductor and their counterparts in anyon condensation. Section \ref{sec:toricCode} further aids the brief exposition by two simple examples, namely the $\Z_2$ toric code and doubled semion models. Section \ref{sec:coherentStates} explicitly constructs the coherent states that describe the condensates of anyons in the most generic setting. Section \ref{sec:symmTrans} constructs the symmetry transformation in condensed phases. With the doubled Ising topological phase, Section \ref{sec:exampleIsing} exemplifies that anyon condensation leads to a child phase being an SET phase. Section \ref{sec:chargeCond} (\ref{sec:dyonCond}) solves in the TQD model the problem of charge (dyon) condensation that leads to a group symmetry. Section \ref{sec:gauging} discusses the reverse process of anyon condensation, i.e., gauging an SET/SPT phase to a topological phase without a global symmetry. The appendices collect certain calculations too detailed to be part of the main text.


\section{A brief exposition of the main results}\label{sec:exposition}

Suppose the parent phase is characterized by a modular tensor category $\mathcal{C}$. The condensate in the condensed phase is characterized by a strongly separable commutative Frobenius algebra $A$ in $\mathcal{C}$. The anyons in the condensed phases are characterized by $\Rep^0 A$, where $\Rep^0 A$ is a full subcategory of $\Rep A$---the category of modules over $A$---generated by objects commuting with $A$. The rest objects not commuting with $A$ identify confined anyons.

To construct the explicit ground states of the condensed phase, we first construct an effective Hamiltonian to describe the dynamics of anyon condensation. We will start with an effective Hilbert space with basis states
\begin{equation}
	\label{eq:basisAJMU}
	\ket{a;j_1,j_2,\dots,j_N;\mu}.
\end{equation}
Each basis state consists of two parts: (1) $N$ anyons labeled by simple objects $j_n$ in $\C$ and (2) a \textit{condensate} anyon $a$ labeled by a simple direct summand\footnote{For example, say, if $A=a\oplus b\oplus c$, $a$, $b$, and $c$ are called direct summands in $A$, but a direct summand like $a\oplus b$ in $A$ is not simple.} in $A$. Such $N$ ``bare''-anyon states are in general degenerate, and $\mu$ labels the degenerate states. 

We proceed to construct the creation/annihilation operators of condensate anyons to build up the Hamiltonian. Define a \textit{creation operator} $W_n(a)$ as what extracts a condensate anyon $a$ from the condensate and fuse it to the $n$-th anyon $j_n$. See Fig. \ref{fig:reservoirPicture}. Define the effective Hamiltonian as
\begin{equation}\label{eq:HamiltonianPn}
	H_0=-\sum_n Q_n - \lambda\sum_n W_n,
	\quad
	Q_n\ket{j_n}=\delta_{j_n\in A}\ket{j_n},
\end{equation}
where $\lambda>0$, and $Q_n$ is a projector acting on the $n$-th anyon, with $\delta_{j_n\in A}=1$ if $j_n$ is a direct summand in $A$ and $0$ otherwise. The projectors $Q_n$ ensure that condensate anyons contribute no energy. Each $W_n$ is a projector as a linear combination of the creation operators $W_n(a)$ defined in Fig. \ref{fig:reservoirPicture}. The $W_n=1$ subspace is the observable Hilbert space, denoted by $\mathcal{H}^{W=1}$, of the condensed phase.

The observable topological charges are now characterized by $\Rep A$. The ground state(s) are superpositions of the states with arbitrary many condensate anyons. Intuitively we think all $a$-anyons are condensed in these states. These facts justify the terminology \textit{anyon condensation}.

\begin{figure}[!ht]
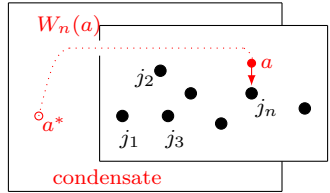

	\begin{center}
		\reserviorPicture
	\end{center}
	\caption{There are $N$ anyons labeled by the simple objects $j_n$ in $\mathcal{C}$. There is also a vacuum---a reservoir that maintains an indefinite number of condensate anyons (simple direct summands $a$ of $A$). The creation operator $W_n(a)$ creates a dual of $a$-anyon (can be viewed as a hole of $a$-anyon) in the reservoir and inserts an $a$ anyon to $j_n$.}
	\label{fig:reservoirPicture}
\end{figure}

The basis \eqref{eq:basisAJMU} can be transformed to another basis defined by the simultaneous eigenvectors of all $W_n(a)$, called the \textit{coherent states}. They should have eigenvalues $W_n=\pm 1$. The eigenvalues are called the \textit{coherent parameters}. Nevertheless, we only use the subspace $\mathcal{H}^{W=1}$ as the physical Hilbert space. The coherent states turn out to be a more natural basis in the condensed phase since they are labeled by the observables characterized by $\Rep A$.

Now we shall recover all the distinct sectors of ground states, obtaining a generalized version of the Goldstone theorem. 
We shall also show that there is a hidden global symmetry in the condensed phase and construct the corresponding symmetry transformations. This symmetry renders the condensed phase an SET or SPT phase.

Suppose $\mathcal{C}=\Rep(Q)$ is the representation category of some quantum group\footnote{More precisely, we mean quasi Hopf algebras or more generally weak Hopf algebras.} $Q$. Then the Hamiltonian $H_0$ is invariant under $Q$ transformations. In the condensed phase the ground state(s) do not necessarily manifest the $Q$-symmetry. This phenomenon is called \textit{spontaneous symmetry breaking}, or in other words, the $Q$-symmetry is hidden from the physical Hilbert space.

To see this, we add to $H_0$ an infinitesimal Hamiltonian $H_\epsilon$:
\begin{equation}\label{eq:infintyHamiltonian}
	H=H_0+H_\epsilon,
\end{equation}
where $H_\epsilon$ violates the $Q$-invariance. This $H_\epsilon$ prefers particular ground states that do not manifest the $Q$-symmetry. In the end of the day, we should take $\epsilon\to 0$.

To make precise the above mechanism, we focus on a special case where the spontaneously broken symmetry is a finite group $G$. This happens if $A$ is \textbf{absorbing}, i.e., $A\otimes A=A^{\oplus n}$ for $n\in \mathbb{Z}_+$ (precisely, it is the identity of $A$-modules in $\Rep A$). By Tannaka duality, we can reconstruct a finite group $G$ such that $A$ is equivalent to the regular representation of $G$. This derives a global transformations $U_g$, $g\in G$ on the condensate. We furthermore extend\footnote{Mathematically, this extension is equivalent to construct a fibre functor $\mathcal{C}\to Vec$ and a representation $U_g$ of $G$.} $U_g$ to act on all anyons in the condensed phase. We show that
\begin{enumerate}
	\item[(i)] $[H_0,U_g]=0$.
	\item[(ii)] The Hilbert space are divided into $|G|$ sectors. This collection of states can be understood as a discrete zero mode of a generalized Goldstone mode.
	The chosen $H_\epsilon$ does not commute with $U_g$ and hence selects a particular sector as the physical one.
\end{enumerate}
With this global $G$-symmetry, the resulting condensed phase is a symmetry enriched (protected) topological phase.

We emphasize the distinction between `symmetry' and `invariance'. The original $Q$-invariance of the Hamiltonian is never lost. The above construction implies that a $G$-sub-symmetry is hidden in the condensed phase. The survived symmetry is encoded in the topological observables characterized by $\Rep A$.

We call such a change of the manifestation of symmetry a \textit{symmetry rearrangement}. Concretely, the $Q$-invariance is rearranged as
\begin{enumerate}
	\item[(i)] The surviving symmetry encoded by topological observables characterized by $\Rep A$.
	\item[(ii)] The hidden global symmetry transformation $U_g$ is decomposed into two parts(See Fig. \ref{fig:symmetryArrangement}):
	\begin{enumerate}
		\item The mixing of anyons with the same energy within one sector.
		\item When expressed in terms of coherent states, each sector is labeled by coherent parameters $\{g_n\}$. The $U_g$ maps one sector to another by  $\{g_n\}\mapsto \{gg_n\}$.
	\end{enumerate}
\end{enumerate}

\begin{figure}[!ht]
	\centering\includegraphics[width=0.8\textwidth]{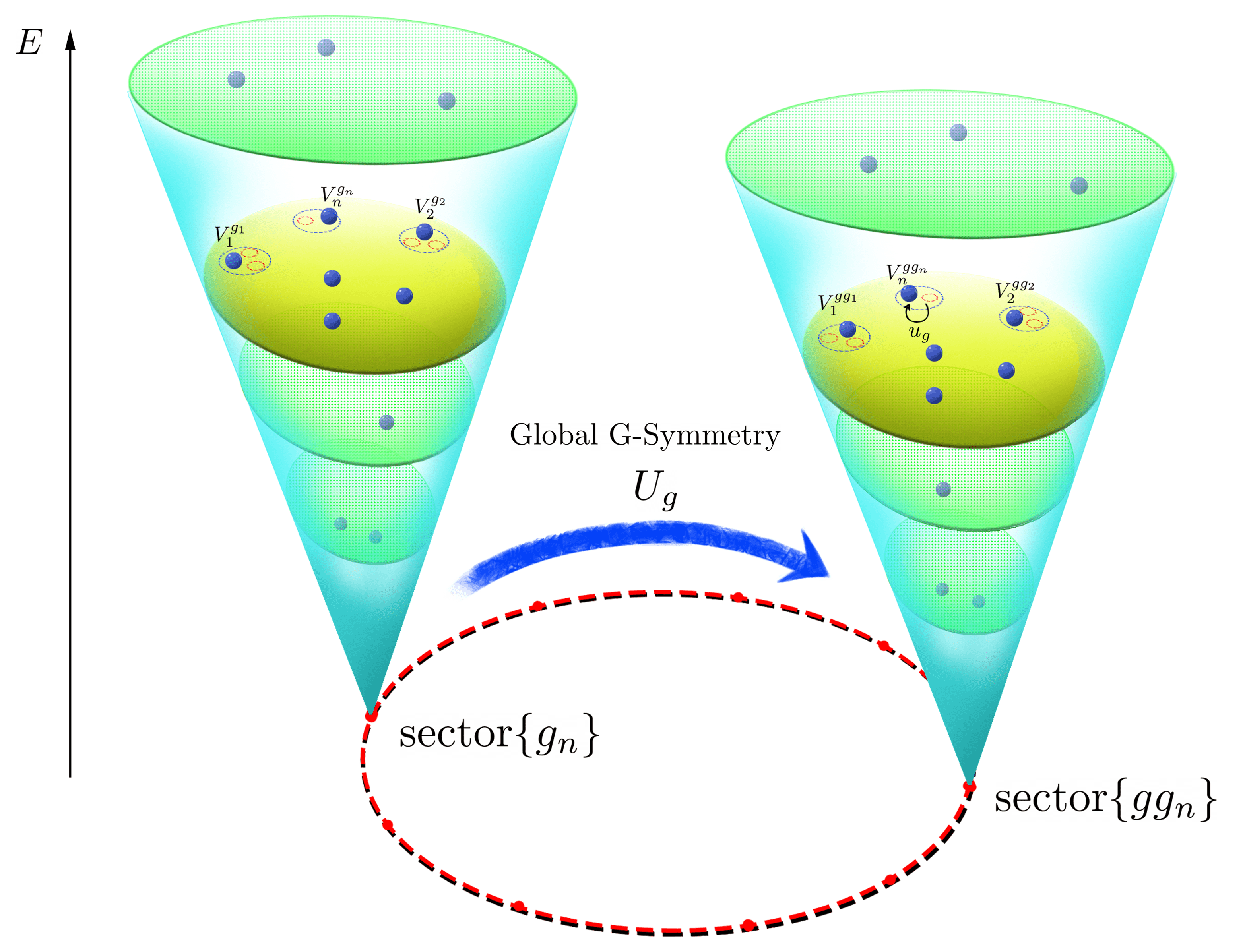}
	\caption{A global transformation $U_g$ maps the sector with parameter $\{g_n\}$ to one with $\{gg_n\}$. An anyon with topological charge $V^{g_n}_n$ is transformed by $V^{g_n}_n\to T_g(V^{g_n}_n)$, where $T$ is a group action of $G$ on $\Rep A$.}
	\label{fig:symmetryArrangement}
\end{figure}

Mathematically, the decomposition of the global transformation $U_g$ is characterized by the \textit{equivariantization category} $(\Rep A)^G$ whose objects are $(V,u_g)$ for $V\in \Rep A$ and $u_g:T_g(V)\to V$ is an equivariant structure. The mapping from one sector to another under $U_g$ can be formulated by $T_g$ where $T$ is a group action on $\Rep A$. The mixing of anyons with the same energy within each sector can be formulated by the equivariant structure $u_g:T_g(V)\to V$. The equivalence $(\Rep A)^G\cong \mathcal{C}$ implies that the $Q$-invariance can be recovered from the hidden global $G$-symmetry.

As concrete examples of the above mechanism, we explicitly construct the effective theory of the charge condensation in the twisted quantum double models, leading to SPT phases.

\begin{table}[!h]\small
	\begin{center}
		\begin{tabular}{l|cc}
			\toprule 
			& anyon condensation concepts & example: superconductor\\
			\midrule
			original invariance & modular tensor category $\mathcal{C}$ & $\mathcal{C}=\Rep_{D(U(1))}$\\
			\hline
			condensate (coherent state) &  $A$ anyons &  cooper pairs\\
			\hline
			hidden global symmetry & \makecell[c]{finite group $G=\Aut(A)$ \\ reconstructed from $A$} & $U(1)$\\
			\hline
			symmetry rearrangement & $\mathcal{C}\to (\Rep A)^G$ & \makecell[c]{$U(1)$ gauge invariance \\ $\Rep A=Vec_{U(1)}$, $G=U(1)$    }\\
			\hline
			confined sectors & $\Rep^g A$, $g\neq 1$  & $U(1)$-flux \\
			\hline
			\makecell[l]{coherent parameters\\ (not observable)} & $\{g_n\}$  & fixed gauge\\
			\hline
			order parameter & $g_0$ & $U(1)$ phase   \\
			\bottomrule
		\end{tabular}
	\end{center}
	\caption{Superconductor as an example of anyon condensation. The superconductor has a generalized setup, where $\mathcal{C}$ is no longer a finite category, and quantum fields have infinitely many degrees of freedom. The gauge $U(1)$ invariance is thus rearranged to a field translation of the phason field, which seems different from our results (the $G$ invariance becomes the hidden symmetry $G$). }\label{tab:superconductivity}
\end{table}

\section{Condensed states: in toric code model and double semion model}\label{sec:toricCode}

In this section we study the bosonic anyon condensation in the toric code model and double semion model as two simplest examples to demonstrate our results. We will construct the coherent states and show the dynamical symmetry rearrangement.

\subsection{Example: in toric Code Model}

In the toric code model, elementary excitations carry topological charges: vacuum $\mathbf{1}$, $\mathbb{Z}_{2}$ charge $e$, $\mathbb{Z}_{2}$ flux $m$, and the charge-flux composite $em$. Let us condense $e$. After the condensation, two topological charges remains: $A=\mathbf{1} \oplus e$ and $B=m\oplus em$. 



In what follows we consider charge condensation, so $A=1\oplus e$, where $e$ is the charge excitation in the toric code model. The new vacuum, i.e., the vacuum of the condensed phase, will be the superposition of states of arbitrary many $e$'s.

We start with the basis vectors
\begin{equation}
	\label{eq:basisJ}
	|a;j_{1} ,j_{2} ,\dotsc ,j_{N} \rangle ,
\end{equation}
where $a\in\{1,e\}$, and $j_{n}\in \{1,e,m,em\}$ the topological charge of the $n$-th anyon (Fig. \ref{fig:reservoirPicture}).  Note that the degeneracy label $\mu$ in Eq. \eqref{eq:basisAJMU} does not exist here.

By using the basis \eqref{eq:basisJ}, we assume the total topological charge could be either $1$ or $e$. We may have other types of constraints on the total topological charge depending on the boundary conditions of the system. Nevertheless, such constraints will not affect our results derived in this paper. Thus, for simplicity, we do not specify boundary conditions in the rest of the paper.


\subsubsection{Creation Operator and coherent states}

In this example, define the \textbf{creation operator} $W_{n}(e)$ by
\begin{equation}
	W_{n}(e) |a;j_{1} ,j_{2} ,\dotsc , j_{n},\dots,j_{N} \rangle =|e\ox a;j_{1} ,j_{2} ,\dotsc , e\ox j_{n},\dots,j_{N} \rangle ,
\end{equation}
with $n=1,\dots,N-1$, which creates a pair of $e$ in the condensate and fuse one of them to $j_{n}$.

In the following, we will construct the simultaneous eigenvectors of $W_{n}(e)$. We will use them to construct the condensed states later.
Since $W^{2}_{n}(e) =1$, $W_{n}(e)$ has eigenvalues $\pm 1$. 

We first consider the case with positive eigenvalues: $W_{n}(e) |\Phi \rangle = |\Phi \rangle $ $\forall n$. The eigenspace $\mathcal{H}^{W=1}$, i.e., the physical Hilbert space, has basis vectors
\begin{equation}
	\label{eq:basisToricCodeAVV}
	|A;V_{1} ,V_{2} ,\dotsc ,V_{N} \rangle ,
\end{equation}
which abbreviates $\ket{A}\otimes\ket{V_1}\otimes\dots\otimes\ket{V_N}$, where $V_{n} =A,B$, with $\ket{A}=\frac{\ket{1}+\ket{e}}{\sqrt{2}}$, and $B=\frac{\ket{m}+\ket{em}}{\sqrt{2}}$. Note that we only allow the  basis states with even occurrences of $B$. For example, the state $|A;A,B,\dotsc ,B\rangle$ is expanded as
\begin{equation}
	|A;A,B,\dotsc ,B\rangle =2^{-(N+1)/2}\sum_{a=1,e}\sum _{j_{1} =1,e}\sum _{j_{2} =m,em} \dotsc \sum _{j_{N} =m,em}  |a;j_{1} ,j_{2} ,\dotsc ,j_{N} \rangle ,
\end{equation}
in terms of the basis vectors \eqref{eq:basisJ}. 

The basis vectors \eqref{eq:basisToricCodeAVV} span the physical Hilbert space of the condensed phase, where $\ket{A;A,A,\dots,A}$ is the vacuum that is the superposition of arbitrary many condensate anyons.

Now define a $\mathbb{Z}_{2}$ operator $u_-$ by
\begin{equation}
	\label{eq:Z2ActionToricCode}
	\begin{aligned}
		u_- \ket{1} & =\ket{1},\\
		u_- \ket{e} & =-\ket{e},\\
		u_- \ket{m} & =\ket{m},\\
		u_- \ket{em} & =-\ket{em}.
	\end{aligned}
\end{equation}
Let $u_+=\id$, such that $\{u_g,g=\pm\}\cong \mathbb{Z}_2=$ under multiplication. We also define
\begin{equation}\label{eq:Vnpm}
	\ket{V^g}:=u_g\ket{V},
\end{equation}
for $V=A,B$. All simultaneous eigenvectors of $W_{n}(e)$ are generated from $|A;V_{1} ,V_{2} ,\dotsc ,V_{N} \rangle $ by $u_g$:
\begin{equation}
	\label{eq:eigenEquationWn}
	\begin{aligned}
		& W_{n}(e) |A^{g_0}; V_{1}^{g_1} , V_{2}^{g_2} ,\dotsc ,V_{n}^{g_n} ,\dotsc ,V_{N}^{g_N} \rangle \\
		= \, & ( g_{0} g_{n})  |A^{g_0}; V_{1}^{g_1} , V_{2}^{g_2} ,\dotsc ,V_{n}^{g_n} ,\dotsc ,V_{N}^{g_N} \rangle
	\end{aligned}.
\end{equation}
The eigenvalues are computed from the commutator relation
\begin{equation}\label{eq:WFFCommutatorToricCode}
	W_nu_{g_0}u_{g_n} =( g_{0} g_{n}) u_{g_0}u_{g_n} W_n.
\end{equation}

We call these eigenvectors the \textbf{coherent states} of the creation operators $W_{n}(e)$, and call the eigenvalues $\{g_{0};g_{1} ,g_{2} ,\dotsc ,g_{N}\}$ the coherent parameters. 


We summarize the orthonormal condition of the coherent states:
\begin{align}\label{eq:orthonormalCoherentStates}
	&\langle A^{g_0}; V_{1}^{g_1} ,\dotsc ,V_{N}^{g_N} \ket{A^{g'_0}; {V'}_{1}^{g'_1} ,\dotsc,{V'}_{N}^{g'_N}}
	\nonumber\\
	=\, &\delta_{g_0,g'_0}\delta_{g_1,g'_1}\dots\delta_{g_N,g'_N}\delta_{V_1,V'_1}\dots \delta_{V_N,V'_N}
\end{align}

\subsubsection{Effective Hamiltonian and the global symmetry}

We define the effective Hamiltonian as
\begin{equation}\label{eq:effectiveHamiltonianToricCode}
	H_0=-\sum_n Q_n - \sum_n W_n,
	\quad
	Q_n\ket{j_n}=\delta_{j_n\in A}\ket{j_n},
\end{equation}
where $\delta_{j_n\in A}=1$ if $j_n\in \{1,e\}$ and $0$ otherwise.

We shall add to $H_0$ an extra infinitesimal Hamiltonian
\begin{equation}\label{eq:infiHamToricCode}
	H_\epsilon = \epsilon \sum_n\Phi_n,
	\quad
	\Phi_n\ket{A^{g_0};V^{g_1}_1,\dots,V^{g_N}_N}=\delta_{g_n,+}\ket{A^{g_0};V^{g_1}_1,\dots,V^{g_N}_N}.
\end{equation}
We should take the limit $\epsilon\to 0$ at the end of calculation.

We can construct the global symmetry transformation $U_g$ by
\begin{equation}\label{eq:globalTransform}
	U_g|A^{g_0}; V_{1}^{g_1} , \dotsc ,V_{N}^{g_N} \rangle
	= |A^{gg_0}; V_{1}^{gg_1} , \dotsc ,V_{N}^{gg_N} \rangle.
\end{equation}
Obviously, $[H_0,U_g]=0$. In the limit $\epsilon\to 0$, there are two ground states
\begin{equation}\label{eq:twoGroundStatesToricCode}
	\ket{A^\pm;A^\pm,\dots,A^\pm}.
\end{equation}
However, due to the presence of $H_\epsilon$, the actual ground state is
\begin{equation}\label{eq:GroundStateToricCode}
	\ket{A^+;A^+,\dots,A^+}.
\end{equation}

\subsection{Example: in double semion model}

The double semion model is identical to the twisted quantum double model defined by $G=\mathbb{Z}_2$ and a non-trivial 3-cocycle over $G$, we use the same symbols to label the topological charges as in the toric code model. However, there is a difference: these topological charges obey nontrivial braiding statistics: $e$ is called the doubled semion, $m$ the semion, and $em$ the anti-semion.

We condense the doubled semion $e$, using the Frobenius algebra $A=1\oplus e$. The creation operator $W_n$ is defined as in Eq. \eqref{eq:Z2ActionToricCode}. To construct the coherent states, we define the $\mathbb{Z}_2$ action by
\begin{equation}
	\label{eq:Z2ActionDoubledSemion}
	\begin{aligned}
		\begin{aligned}
			u_- \ket{1} & =\ket{1},\\
			u_- \ket{e} & =-\ket{e},\\
			u_- \ket{m} & =\ii\ket{m},\\
			u_- \ket{em} & =-\ii\ket{em}.
		\end{aligned}
	\end{aligned}
\end{equation}
and set $u_+=\mathrm{id}$.

The local $\mathbb{Z}_2$ action is different from that in the toric code model. When acting on $m$ and $em$, $u_-$ is a projective representation: $u_- u_- = -u_+$.

Define $A^{\pm}=\frac{\ket{1}+\ket{e}}{\sqrt{2}}$ and $B^+=\frac{\ket{m}+\ket{em}}{\sqrt{2}}$, $B^-=\ii\frac{\ket{m}-\ket{em}}{\sqrt{2}}$. Now the coherent states has the same form as in the toric code case:
\begin{align}\label{eq:coherentStateDoubledSemion}
	& W_{n}(e) |A^{g_0}; V_{h_1}^{g_1} , V_{h_2}^{g_2} ,\dotsc ,V_{h_n}^{g_n} ,\dotsc ,V_{h_N}^{g_N} \rangle \nonumber\\
	= \, & ( g_{0} g_{n}) |A^{g_0}; V_{h_1}^{g_1} , V_{h_2}^{g_2} ,\dotsc ,V_{h_n}^{g_n} ,\dotsc ,V_{h_N}^{g_N} \rangle,
\end{align}
where we set $V_{h=+}^g=A^g$ and $V_{h=-}^g=B^g$, for convenience of future computation.

The global symmetry transformation is given by
\begin{align}\label{eq:symmetryTransformDoubledSemion}
	&U_g|A^{g_0}; V_{h_1}^{g_1} , V_{h_2}^{g_2} ,\dotsc ,V_{h_n}^{g_n} ,\dotsc ,V_{h_N}^{g_N} \rangle
	\nonumber\\
	=\, & \mu_g(h_1,h_2)\mu_g(h_1h_2,h_3)\dots \mu_g(h_1h_2\dots h_{N-1},h_N)
	\nonumber\\
	&\beta_{h_1}(g,g_1)\beta_{h_2}(g,g_2)\dots\beta_{h_N}(g,g_N)
	\nonumber\\
	&\qquad\quad
	|A^{gg_0}; V_{h_1}^{gg_1} , V_{h_2}^{gg_2} ,\dotsc ,V_{h_n}^{gg_n} ,\dotsc ,V_{h_N}^{gg_N} \rangle,
\end{align}
where $\beta_{h_n}(g,g_n)=-1$ if $h_n=g=g_n=-$ and 1 otherwise, $\mu_g(h_1,h_2)=-1$ if $g=h_1=h_2=-$ and $1$ otherwise.

We shown the charge condensation in the twisted quantum double models with input data $\mathbb{Z}_2$. In Section \ref{sec:chargeCondensationTQD}, we will deal with a generic twisted quantum double model.

\section{Effective theory of anyon condensation in generic cases}
In this section, we will construct an effective theory of anyon condensation in generic cases. We will construct coherent states for the condensed phases in generic cases, and construct the hidden global symmetry transformation.

\subsection{Coherent States}\label{sec:coherentStates}

We take a topological phase characterized by a modular tensor category $\mathcal{C}$ and study an anyon condensation characterized by a strongly separable commutative Frobenius algebra $A\in \mathcal{C}$. Denote by $L_\mathcal{C}$ the set of all representative simple objects of $\mathcal{C}$, by $L_A\subset L_{\mathcal{C}}$ the subset of those that are the direct summands in $A$, and by $L_{\Rep A}$ the set of all representative simple objects of $\Rep A$.

We will start with an effective Hilbert space with basis states presented by
\begin{equation}\label{eq:quasiparticleStates}
	\quasiparticleStates.
\end{equation}
They are the basis vectors of $\mathrm{Hom}_{\mathcal{C}}( a_0,a_1\otimes j_{1} \otimes j_{2} \otimes \dotsc \otimes j_{N})$, where $a_0,a_1,a_2\in L_{A}$, $j_1,\dots,j_N,i_1,\dots\in L_{\mathcal{C}}$ and $L_{\mathcal{C}}$. (For computing with concrete tensors, we fix an appropriate basis of $\mathrm{Hom}_{\mathcal{C}}(i,j\otimes k)$ for all $i,j,k\in L_{\mathcal{C}}$. In the diagram, we have also suppressed the multiplicity degrees of freedom at trivalent vertices.) The above diagram depicts the relevant quantum observables: a topological charge $j_n$ with the $n$-th quasiparticle, and a total topological charge $i_{n-1}$ of the subsystem of first $n$ quasiparticles. The total topological charge is $a_0$. In a real system, the constraint on the total topological charge depends on the boundary condition. We shall not dwell on such constraints since they will not affect our results.

We consider a special situation, where the $a$-anyons only appears in the ground states. Mathematically, if $A$ is \textbf{absorbing}, i.e., $A\otimes A=A^{\oplus n}$ for $n\in \mathbb{Z}_+$ as $A$-modules in $\Rep A$, then a finite group $G$ can be constructed \cite{Deligne1991,Muger2004} according to Tannaka reconstruction theorem. Namely, there exits a unique finite group $G$, such that $A$ is identified with the regular representation of $G$. 
Rigorously, we define $G$ by
\begin{equation}\label{eq:AutAGroupCopy}
	G=\Aut(A)=\left\{g\in \mathrm{End}_{\mathcal{C}}(A) \left|\quad \autAcond\quad\right.\right\},
\end{equation}
where the trivalent vertex presents the multiplication $A\ox A\to A$ and the small circle the unit of $A$. See Appendix \ref{sec:reconstruct} for more details. Throughout this paper, a red line (colored online) presents an $A$. The exact conditions on $A$ for the reconstruction theorem to be valid are given by Proposition \ref{prop:absorbingA} in Appendix \ref{sec:reconstruct}.

We will define the \textit{creation operator} $W_n$ using the algebra structure of the Frobenius algebra $A$. The operator $W_n$ creates a pair of $A$ anyons in the condensate and fuse one to the $n$-th quasiparticle. To make sense the fusion process, the $n$-th quasiparticle must be represented by an $A$-module $V_n\in \mathrm{Rep}A$. We thus define
\begin{equation}\label{eq:creationOperatorWnGenericCase}
	W_n=\frac{1}{\mathrm{dim}_{\mathcal{C}}A}\creationOperatorWn,
\end{equation}
where the red trivalent vertex on the left is the coproduct $\Delta:A\to A\otimes A$, followed by a multiplication $A\ox A\to A$ and the module action $A\ox V_{n}\to V_{n}$. In fact, $W_n$ is a projector on $A\otimes V_{n}$. 
In the example of the toric code model, this generic projector in Eq. \eqref{eq:creationOperatorWgnGenericCase} is identified with $(1+W_n(e))/2$.

We extend $W_n$ to a complete set of projectors respectively labeled by the group elements of $G$. Define
\begin{equation}\label{eq:creationOperatorWgnGenericCase}
	W^g_n=\frac{1}{\mathrm{dim}_{\mathcal{C}}A}\creationOperatorWnAA,
\end{equation}
for $g\in G$. Particularly, $W_n=W_n^1$. The elements in the set $\{W^g_n\}$ are orthonormal:
\begin{equation}\label{eq:orthogonalWn}
	W^g_nW^h_n=\delta_{g,h}W^g_n,
\end{equation}
\begin{equation}\label{eq:completeWn}
	\sum_{g\in G}W^g_n=\mathrm{id}.
\end{equation}

In the following We abbreviate $(V,\rho)\in L_{\Rep A}$ to $V$. The $W_n$-invariant states are spanned by the basis states depicted as
\begin{equation}\label{eq:quasiparticleStatesRepA}
	\quasiparticleStatesRepA,
\end{equation}
where $V_1,V_2,\dots,U_1,U_2,\dots\in L_{\mathrm{Rep}A}$. The $g\in G$ can be chosen arbitrarily. The diagram is a composition of a morphism in $\mathrm{Hom}_{A}( A,A\aox V_{1} \aox V_{2} \aox \dotsc \aox V_{N})$ and a $\mathcal{C}$-morphism $g\in \Aut(A)= G$. In the diagram, we suppress the indices due to the multiplicities in $\mathrm{Hom}_A$. The diagram depicts the relevant quantum observables: the topological charge $V_n$ at the $n$-th quasiparticle, and a total topological charge $U_{n-1}$ of the first $n$ quasiparticles.

One can verify  $[W^g_n,W^{g'}_{n'}]=0$ straightforwardly. Hence, we can define the \textit{coherent states} as the simultaneous eigenvectors of $W^g_n$ for all $n$ and $g$. It is convenient to use them as the new basis instead of that in Eq. \eqref{eq:quasiparticleStates}.

We now explicitly construct the coherent states, in a way similar to that in the previous subsection: We can generate the coherent states by acting some local operators $u_g$ on the $W_n$-invariant basis \eqref{eq:quasiparticleStatesRepA}.

A group element $g$ is an automorphism of $A$ and induces a group structure on any generic object in $\Rep A$. In Appendix \ref{sec:IsomorphismPsi}, we prove that for every $(V,\rho_V)\in L_{\Rep A}$ and $g\in G$, there exits an object $(V,\rho'_V)\in L_{\Rep A}$, and a unique $\mathcal{C}$-isomorphism $u_g:V\to V$, such that
\begin{equation}\label{eq:uniqueMorphismVVprimeABappCOPY}
	\uniqueMorphismVVprimeAB.
\end{equation}

The uniqueness of ${u_g}$ implies that ${u_g}u_h\propto u_{gh}$ on each $(V,\rho_V)$. Denote by $\beta_{(V,\rho_V)}(g,h)$ the coefficient of the proportionality:
\begin{equation}\label{eq:betaVgh}
	\betaVgh.
\end{equation}
Because the action of ${u_g}$ is associative, $\beta_{(V,\rho_V)}$ is a 2-cocycle.

The coherent states are generated by acting $u_g$ on $W_n$-invariant basis states. We define the basis by
\begin{equation}\label{eq:psigBasisRepA}
\ket{g_0A;g_1V_1,g_2V_2,\dots,g_NV_N;U_1,U_2,\dots;\alpha}:=\quasiparticleStatesRepAAH,
\end{equation}
where $V_1,\dots,V_N,U_1,\dots\in L_{\Rep A}$,
\begin{equation}\label{eq:alphaBasisCoherentStates}
	\begin{aligned}
		&\alpha_1\in \Hom_A(U_1,V'_1\aox V'_2),\\
		&\alpha_2\in \Hom_A(U_2,U_1\aox V'_3),\\
		&\dots
	\end{aligned}
\end{equation}
and $u_{\ginv{g_n}}:V'_n\to V_n$ with $\ginv{g_n}\equiv g_n^{-1}$.

The defining property \eqref{eq:uniqueMorphismVVprimeABappCOPY} implies an eigenvalue equation
\begin{equation}\label{eq:WngPhigV}
	\begin{aligned}
		&W_n^{(g_{n}\ginv{g_0})}\ket{g_0A;g_1V_1,g_2V_2,\dots,g_NV_N;U_1,U_2,\dots;\alpha}\\
		=&\ket{g_0A;g_1V_1,g_2V_2,\dots,g_NV_N;U_1,U_2,\dots;\alpha}.
	\end{aligned}
\end{equation}
Hence, we dub the states \eqref{eq:psigBasisRepA} the \textit{coherent states} with the coherent parameters $g_1,\dots,g_N$.

The coherent states form a new basis of the effective Hilbert space. Eq. \eqref{eq:psigBasisRepA} is an orthonormal basis of $\Hom_{\mathcal{C}}(A,A\ox V_1\ox V_2\ox \dots \ox V_N)$. 
%
To see this, we need only look at the decomposition of $\Hom_{\mathcal{C}}(U_1,V_1\ox V_2)$:
\begin{equation}\label{eq:proofCoherentStatesOrthonormalBasis}
	\begin{aligned}
		&\Hom_{\mathcal{C}}(U_1,V_1\ox V_2)\\
		=&\Hom_A(U_1,F(V_1\otimes V_2))\\
		=&\Hom_A(U_1,FV_1\aox FV_2)\\
		=&\bigoplus_{g_1,g_2\in G}\Hom_A(U_1,T_{g_1}V_1 \aox T_{g_2}V_2),
	\end{aligned}
\end{equation}
where $F:\mathcal{C}\to\Rep A$ is the free functor given in Eq. \eqref{eq:FXaction}, and $h_1$ can be any group element. The first equality above follows Eq. \eqref{eq:FXidempotent}, and in the third equality, use is made of the decomposition $FV=\oplus_g T_g V$ as in Proposition \ref{prop:FVdecomposition}(i) of Appendix \ref{sec:orbitsUnderGroupAction}.

Using the creation operator $W_n$, we can define the effective Hamiltonian as in Eq. \eqref{eq:HamiltonianPn}, and we obtained the a physical Hilbert space spanned by $\ket{gA;gV_1,gV_2,\dots,gV_N;U_1,U_2,\dots;\alpha}$, where $g$ is randomly chosen by the system.

\subsection{Symmetry transformations}\label{sec:symmTrans}

We will show that the global symmetry is hidden (spontaneously broken) in the condensed states using the coherent states constructed in the previous section, and construct the corresponding symmetry transformation.

To study how $u_g$ acts, we first introduce a grading. Namely, for every simple $(Z,\rho)\in\mathrm{Rep}A$, define the grading $\grad(Z,\rho)\in G$ by
\begin{equation}\label{eq:gradingZRho}
	\grad(Z,\rho)=\frac{1}{\dima Z\,\dimc A}\gradingXinRho.
\end{equation}
The grading is conjugated under $u_g$: $\grad u_g(Z,\rho)=g\grad(Z,\rho)\ginv{g}$. See Appendix \ref{sec:RepAGradedByG} for more properties. 

We shall use $u_g$ to study the transformations of the coherent states, so we need to know how to transform $\alpha\in\mathrm{Hom}_A( (U,\rho_U), (V,\rho_V)\otimes_A(W,\rho_W))$. Define ${^{g} \alpha}$ by
\begin{equation}\label{eq:MapUVW}
	\UVWalphaAD:=\UVWalphaAC,
\end{equation}
which is $\mu_{\ginv g}(U,V,;W)(\alpha)$ as defined in Eq. \eqref{eq:mugUVW} in Appendix \ref{sec:IsomorphismPsi}.

According to Eq. \eqref{eq:betaVgh}, we have
\begin{equation}\label{eq:psigPsihPsigh}
	{^g}({^{h}}\alpha)=\frac{\beta_U(\ginv g,\ginv h)}{\beta_V(\ginv g,\ginv h)\beta_W(\ginv g,\ginv h)}({^{gh}}\alpha).
\end{equation}

The symmetry transformation $U_g$ is defined by
\begin{equation}\label{eq:globalSymmetryTransform}
	U_g:\quasiparticleStatesRepAAH\mapsto\quasiparticleStatesRepAAI,
\end{equation}
where the RHS is expanded as
\begin{equation}\label{eq:expandTg}
	\quasiparticleStatesRepAAE = \prod_{n=1}^{N}\beta_{V_n}(\ginv{g_n},\ginv{g})\quasiparticleStatesRepAAF .
\end{equation}

\subsection{Example: Double Ising model}
\label{sec:exampleIsing}

The topological phase characterized by the quantum doubel of the Ising category, with simple objects $i\overline{j}\in L_{\mathcal{C}}$, where $i,j=0,1,2$ (usually denoted as $1,\sigma,\psi$ in the literature). We fix a basis of $\Hom_{\mathcal{C}}$ such that
\begin{equation}\label{eq:basisTransform}
	\basisHomC = \sum_{x_6\in L_{\mathcal{C}}}v_{x_3}v_{x_6}G^{x_1x_2x_3}_{x_4x_5x_6} \basisHomCAA,
\end{equation}
for $x_n=\qdij{n}\in L_{\mathcal{C}}$. Here $v_{\qdbar{i}{j}}=v_i v_j$,
\begin{equation}\label{eq:vIsing}
	v_0=1,v_{1}=2^{1/4},v_{2}=1,
\end{equation}
\begin{equation}\label{eq:6jQD}
	G^{\qdij{1}\qdij{2}\qdij{3}}_{\qdij{4}\qdij{5}\qdij{6}}
	=G^{i_1i_2i_3}_{i_4i_5i_6}\overline{G^{j_1j_2j_3}_{j_4j_5j_6}},
\end{equation}
and
\begin{equation}\label{eq:6jIsing}
	\begin{aligned}
		&G_{000}^{000}=1, \quad G_{111}^{000}=\frac{1}{\sqrt[4]{2}}, \quad G_{222}^{000}=1, \quad G_{011}^{011}=\frac{1}{\sqrt{2}} \\
		&G_{122}^{011}=\frac{1}{\sqrt[4]{2}}, \quad G_{211}^{011}=\frac{1}{\sqrt{2}}, \quad G_{022}^{022}=1, \quad G_{112}^{112}=-\frac{1}{\sqrt{2}}.
	\end{aligned}
\end{equation}

Let $A=\qdbar{0}{0}\oplus\qdbar{2}{2}$, with the multiplication $A\ox A\to A$ expressed in the above basis by
\begin{equation}\label{eq:multiplicationAIsing}
	f = 
	\multiplicationAIsing + \multiplicationAIsingAA
	+ \multiplicationAIsingAB + \multiplicationAIsingAC.
\end{equation}

There are six (inequivalent representative) simple objects in $\Rep A$. Using the above basis, they are expanded as
\begin{equation}\label{eq:simpleObjectsInRepA}
	(V,\rho_V=\sum_{ax_1x_2}\rho^a_{x_1x_2}\rhoInRepA),
\end{equation}
where $a$ runs over the summands of $A$ and $x_1,x_2$ over those of $V$. The six simple objects are listed as
\begin{equation}\label{eq:sixObjectsInRepAIsing}
	\begin{array}{|c|c|}
		\hline
		0 \bar{0}\oplus 2 \bar{2} & 
		\begin{array}{cccc}
			\rho _{0 \bar{0},0 \bar{0}}^{0 \bar{0}}=1 & \rho _{2 \bar{2},2 \bar{2}}^{0 \bar{0}}=1 & \rho _{0 \bar{0},2 \bar{2}}^{2 \bar{2}}=\frac{1}{\xi } & \rho _{2 \bar{2},0 \bar{0}}^{2 \bar{2}}=\xi  \\
		\end{array}
		\\
		\hline
		0 \bar{2}\oplus 2 \bar{0} & 
		\begin{array}{cccc}
			\rho _{0 \bar{2},0 \bar{2}}^{0 \bar{0}}=1 & \rho _{2 \bar{0},2 \bar{0}}^{0 \bar{0}}=1 & \rho _{0 \bar{2},2 \bar{0}}^{2 \bar{2}}=\frac{1}{\xi } & \rho _{2 \bar{0},0 \bar{2}}^{2 \bar{2}}=\xi  \\
		\end{array}
		\\
		\hline
		{1 \bar{1}}_- & 
		\begin{array}{cc}
			\rho _{1 \bar{1},1 \bar{1}}^{0 \bar{0}}=1 & \rho _{1 \bar{1},1 \bar{1}}^{2 \bar{2}}=-1 \\
		\end{array}
		\\
		\hline
		{1 \bar{1}}_+ & 
		\begin{array}{cc}
			\rho _{1 \bar{1},1 \bar{1}}^{0 \bar{0}}=1 & \rho _{1 \bar{1},1 \bar{1}}^{2 \bar{2}}=1 \\
		\end{array}
		\\
		\hline
		0 \bar{1}\oplus 2 \bar{1} & 
		\begin{array}{cccc}
			\rho _{0 \bar{1},0 \bar{1}}^{0 \bar{0}}=1 & \rho _{2 \bar{1},2 \bar{1}}^{0 \bar{0}}=1 & \rho _{0 \bar{1},2 \bar{1}}^{2 \bar{2}}=\frac{1}{\xi } & \rho _{2 \bar{1},0 \bar{1}}^{2 \bar{2}}=\xi  \\
		\end{array}
		\\
		\hline
		1 \bar{0}\oplus 1 \bar{2} & 
		\begin{array}{cccc}
			\rho _{1 \bar{0},1 \bar{0}}^{0 \bar{0}}=1 & \rho _{1 \bar{2},1 \bar{2}}^{0 \bar{0}}=1 & \rho _{1 \bar{0},1 \bar{2}}^{2 \bar{2}}=\frac{1}{\xi } & \rho _{1 \bar{2},1 \bar{0}}^{2 \bar{2}}=\xi  \\
		\end{array}
		\\
		\hline
	\end{array},
\end{equation}
where $\xi$ is a free parameter to choose. We set $\xi =1 $ in the following. Will use a subscript $\pm$ to distinguish the third and the fourth objects who have the same $V=1\bar{1}$.

The group $\mathbb{Z}_2$ is reconstructed as
\begin{equation}\label{eq:reconstructZ2Ising}
	G=\Aut A=\{\id_{0\bar{0}} \pm \id_{2\bar{2}}\}.
\end{equation}
For simplicity, we denote the elements by $\pm$.

These six simple objects in $\Rep A$ are graded by $G=\mathbb{Z}_2=\{+,-\}$. The first four are graded by $g=+$, while the last two by $g=-$.

The group action $T_g$ maps ${1\bar{1}}_{\pm}\to {1\bar{1}}_{\mp}$, and any other object to itself. The quivariant structure $u_+=\id_V$ and $u_-$ is given by
\begin{equation}\label{eq:equivariantUgIsing}
	\begin{array}{|c|c|}
		\hline
		0 \bar{0}\oplus 2 \bar{2} & u_- = \id_{0\bar{0}} - \id_{2\bar{2}}
		\\
		\hline
		0 \bar{2}\oplus 2 \bar{0} & u_- = \id_{0\bar{2}} - \id_{2\bar{0}}
		\\
		\hline
		{1 \bar{1}}_- & u_- = \id_{1 \bar{1}}
		\\
		\hline
		{1 \bar{1}}_+ & u_- = \id_{1 \bar{1}}
		\\
		\hline
		0 \bar{1}\oplus 2 \bar{1} & u_- = \id_{0\bar{1}} - \id_{2\bar{1}}
		\\
		\hline
		1 \bar{0}\oplus 1 \bar{2} & u_- = \id_{1\bar{0}} - \id_{1\bar{2}}
		\\
		\hline
	\end{array}
\end{equation}

As a result, the group action $T:\underline{G}\to \Aut_\ox \mathcal{C}$ has the trivial tensor structure $\id_V:T_{gh}V\to T_gT_h V$. The functor $T_g$ also has the trivial tensor structure $\id_{V\aox W}:T_g(V\aox W)\to T_g V\aox T_g W$.

\section{Charge condensation in Twisted quantum double phases}\label{sec:chargeCond}
\label{sec:chargeCondensationTQD}

Consider a twisted quantum double phase\cite{Hu2012a} characterized by $\mathcal{C}=\mathrm{Rep}(D^\omega G)$, the representation category of the quantum double $D^\omega G$, with $G$ a finite group and $\omega\in H^3[G,U(1)]$. For the definition of $D^\omega G$ and a brief review of its representations, see Appendix \ref{sec:tqdAndreps}.

Consider a charge condensation, in which all the charges condense. We shall first concretely construct a Frobenius algebra $A\in \mathcal{C}$ for such a condensation, and then derive the coherent states and the symmetry transformations $U_g$ defined previously.

The dual algebra $(D^\omega G)^*$ of $D^\omega G$ is a representation of $D^\omega G$, with the representation matrix given by
\begin{equation}\label{eq:VhRepTQD}
	\pi^R(\qab{g}{h})\kab{g'}{h'}=\ket{\qab{g'}{h'}\qdB^h(\qdA^x)^{-1}}=\delta_{h,h'}\frac{1}{\beta_{gh\ginv{g}}(g,\ginv{g})}\ket{\qab{g'\ginv{g}}{gh'\ginv{g}}},
\end{equation}
where $(\qdA^x)^{-1}$ is defined by $\qdA^x(\qdA^x)^{-1}\qdB^y=\qdB^y$ for all $y\in G$.

We consider two morphisms of the representations of $D^\omega G$: the coproduct $\Delta: (D^\omega G)^* \to (D^\omega G)^* \otimes (D^\omega G)^*$
and its transpose $\Delta^*: (D^\omega G)^* \otimes (D^\omega G)^*  \to (D^\omega G)^*$. The transpose $\Delta^*$ is defined by
\begin{equation}\label{eq:dualMultiplication}
	\Delta^*:\kab{g_1}{h_1}\otimes \kab{g_2}{h_2}\mapsto \delta_{g_1,g_2}\mu_{g_1}(h_1,h_2)\kab{g_1}{h_1h_2},
\end{equation}
where $\Delta$ is defined in Eq. \eqref{eq:coproductTQD} and $\mu$ is the coproduct structure defined in Eq. \eqref{eq:tqdMu}. 

Let
\begin{equation}\label{eq:VhDefinition}
	V_h=\text{vector space spanned by }\{\qdB^{h}\qdA^g\}_{g\in G}.
\end{equation}
Obviously, $(V_h,\pi^R)$ is a representation of $D^\omega G$. We shall use them to construct the Frobenius algebra $A$ and $\Rep A$.

Let
\begin{equation}\label{eq:frobeniusAlgebraInTQD}
	A=V_{h=1}.
\end{equation}
Then $(A,\Delta^*)$ is an associative algebra. Moreover, it carries a commutative Frobenius algebra structure, given by
\begin{equation}\label{eq:FAmultiplication}
	\Delta^*:A\ox A\to A, 
	\quad 
	\skab{g}{1}\ox \skab{g'}{1}\mapsto \delta_{g,g'}\skab{g}{1},
\end{equation}
\begin{equation}\label{eq:FAcomultiplication}
	\Delta:A\to A\ox A,
	\quad
	\skab{g}{1}\mapsto \skab{g}{1}\ox \skab{g}{1},
\end{equation}
\begin{equation}\label{eq:FAunit}
	\epsilon^*:\mathbb{C}\to A, 
	\quad
	1\mapsto \sum_g \skab{g}{1},
\end{equation}
and
\begin{equation}\label{eq:FAcounit}
	\epsilon:A\to \mathbb{C},
	\quad
	\skab{g}{1}\mapsto 1
\end{equation}

By definition \eqref{eq:AutAGroupCopy}, a finite group $\Aut(A)$ can be reconstructed:
\begin{equation}\label{eq:GroupReconstructed}
	\Aut(A)=\{\pi^L(\qdA^g\qdB^1),g\in G\},
\end{equation}
where $\pi^L$ is defined to be
\begin{equation}\label{eq:VhRepLTQD}
	\pi^L(\qab{g}{h})\kab{g'}{h'}=\ket{\qab{g}{h}\qab{g'}{h'}}.
\end{equation}
Note that since any left multiplication commutes with a right multiplication, $\pi^L(\qdA^g\qdB^1)$ is a morphism of  $(A,\pi^R)\to(A,\pi^R)$. We shall not distinguish $G$ from $\Aut(A)$, where $\pi^L(\qdA^g\qdB^1)$ is identified with $g$ since it defines a regular representation of $G$ with $A$ being the representation space.

The multiplication restricted to $\Delta^*: A\ox V_h\to V_h$ is commutative:
\begin{equation}\label{eq:AVhVhCommutative}
	\AVhVhCommutative = \AVhVhCommutativeAA,
\end{equation}
which defines $(V_h,\Delta^*)$ as an $A$-module with $\Delta^*: A\ox V_h\to V_h$ being the module action. By dimension counting, $L_{\Rep A}=\{(V_h,\Delta^*)\}_{h\in G}$ is a complete set of (representatives of) the simple objects in $\Rep A$.

%

The tensor product carries a group structure:
\begin{equation}\label{eq:VgVhVgh}
	V_g\otimes_A V_h = V_{gh},
\end{equation} 
\begin{equation}\label{eq:gradingVg}
	\grad V_g = g.
\end{equation}
Computing $\grad(V_g)$ involves the braiding in $\Rep_{D^\omega(G)}$ defined by $c_{V_{h_1},V_{h_2}}=\Sigma_{V_{h_1},V_{h_2}}\circ (\pi^R\ox \pi^R)(R)$, where $\Sigma_{(V_{h_1},,\pi^R),(V_{h_2},,\pi^R)}$ is the flip $v\ox w\mapsto w\ox v$, and $R$ is the universal $R$ matrix in Eq. \eqref{eq:Rmatrix}.

For each $(V_h,\Delta^*)\in L_{\Rep A}$, we define $(V_h,\Delta^{*g})\in\Rep A$ by
\begin{equation}\label{eq:VhDeltag}
	\VhDeltag.
\end{equation}

Each $V_h$ carries an equivariant structure $u_g=\pi^L(\qdA^g):(V_x,\pi^R)\to (V_{gx\ginv{g}},\pi^R)$ as a representation morphism. The defining property of $u_g$ can be verified
\begin{equation}\label{eq:psigDefinitionVhiVhk}
	\psigDefinitionVhiVhk,
\end{equation}
where in the second equality we used the commutativity $\pi^L(\qdA^g)\circ \Delta^*=\Delta^*\circ \pi^L(\Delta(\qdA^g))$. 

Using Eq. \eqref{eq:tqdMultiplication}, we have
\begin{equation}\label{eq:psigPsihPsigh}
	u_{g_1}u_{g_2}=\beta_{h}(g_1,g_2)u_{g_1g_2}: V_h\to V_{g_1g_2hg_1^{-1}g_2^{-1}}.
\end{equation}

The group action $T:\underline{G}\to \Aut_{\ox}(\mathcal{C})$ is given as follows. Define the functor $T_g:\Rep A\to\Rep A$ by $T_g(V_h,\Delta^*)=(V_h,\Delta^{*\ginv{g}})$ and $T_g(f)=f$. The tensor structure on $T_g$ is defined by
\begin{equation}\label{eq:TensorStructureOnTgMu}
	\mu_g(x,y)\id_{V_x\aox V_y}: T_g(V_x\aox V_y) \to T_{g}(V_x)\aox T_{g}(V_y).
\end{equation}

The tensor structure on the the functor $T_g$ is defined by
\begin{equation}\label{eq:TensorStructureTgamma}
	\beta_{x}(g,h)^{-1}\id_{V_x}:T_{gh}(V_x)\to (T_g\circ T_h)(V_x).
\end{equation}

In summary, we have shown that $\Rep A\cong Vect^{\omega}_G$, where $Vect^{\omega}_G$ is the category of finite dimensional $G$-graded vector spaces with an associativity $\omega\in H^3[G,U(1)]$. With the group action $T$, the equivariant structure $u_g=\pi^L(\qdA^{g})$.

\subsection{Effective model}\label{sec:TQDEffectiveModel}

We have explicitly constructed the group action $T$ and the equivariant structure $u_g$. Using these structures, we can write down the effective Hamiltonian. The Hilbert space is spanned by the basis
\begin{equation}
\ket{\qba{1}{g_0}; \qba{h_1}{g_1},\qba{h_2}{g_2},\dots,\qba{h_N}{g_N} }.
\end{equation}
The effective Hamiltonian is
\begin{equation}
H= -\sum_n Q_n-\lambda\sum_n W_n,
\end{equation}
where
\begin{equation}
Q_n\ket{\qba{1}{g_0}; \qba{h_1}{g_1},\dots,\qba{h_N}{g_N} }
=\delta_{h_n,1}\ket{\qba{1}{g_0}; \qba{h_1}{g_1},\dots,\qba{h_N}{g_N} },
\end{equation}
and
\begin{equation}
W_n\ket{\qba{1}{g_0}; \qba{h_1}{g_1},\dots,\qba{h_N}{g_N} }
=\delta_{g_n,g_0}\ket{\qba{1}{g_0}; \qba{h_1}{g_1},\dots,\qba{h_N}{g_N} }.
\end{equation}

The ground states are $\ket{\qba{1}{g}; \qba{1}{g},\dots,\qba{1}{g} }$. The actual ground state with a fixed $g$ will be chosen randomly by the system.

According to Definition \eqref{eq:globalSymmetryTransform} of the global symmetry transformation $U_g$, we have
\begin{equation}
U_g\ket{\qba{1}{g_0}; \qba{h_1}{g_1},\dots,\qba{h_N}{g_N} }=\pi^R\left(\Delta^N(\qdA^{g})\right)\ket{\qba{1}{g_0}; \qba{h_1}{g_1},\dots,\qba{h_N}{g_N} },
\end{equation}
where $\Delta^N=(\Delta\otimes \mathrm{id}^{\otimes{(N-1)}})\circ(\Delta\otimes \mathrm{id}^{\otimes{(N-2)}})\circ\dots \circ(\Delta\otimes \mathrm{id})\circ\Delta$.

\subsection{Example: $G=S_3$}

The simplest non-Abelian group is $G=S_3 = (r,s|r^3,s^2)$. There are three irreducible representations denoted by $j_0$, $j_1$ and $j_2$, with dimensions $\dim j_0=1$, $\dim j_1=1$, and $j_2=2$.

The twisted quantum double $D^\omega S_3$ has eight irreducible representations. See Section \ref{sec:repTQD} for the detailed construction of these representations. They are classified by the three conjugacy classes of $S_3$, denoted by $C_1=\{1\}$, $C_2=\{r,r^2\}$, and $C_3=\{s,rs,r^2s\}$. We denote by $j_0,j_1,j_2$ the irreducible projective representations of $Z_{1}=S_3$, by $\theta_0,\theta_1,\theta_2$ of $Z_{r}=\mathbb{Z}_3$, and by $\mu_0,\mu_1$ of $Z_{s}=\mathbb{Z}_2$, where $Z_h=\{g\in G|gh=hg\}$ is the centralizer. The eight irreducible representations of $D^\omega G$ are labeled by $(C_1,j_0)$, $(C_1,j_1)$, $(C_1,j_2)$, $(C_2,\theta_0)$, $(C_2,\theta_1)$, $(C_2,\theta_2)$, $(C_3,\mu_0)$, and $(C_3,\mu_1)$, respectively. 

The Frobenius algebra $A$ and $\Rep A$ can be expressed in terms of these irreducible representations. Let $A=(C_1,j_0)\oplus (C_1,j_1)\oplus (C_1,j_2)\oplus (C_1,j_2)$. Label the vectors in $A$ by $\ket{alpha;j,\beta}$, where $\alpha$ is the multiplicity index of $j$ appearing in $A$, and $\ket{j,\beta}$ is the basis vectors of the representation space of $j$.

There are six simple objects in $\mathrm{Rep}A$:
\begin{equation}\label{eq:SixObjectInRepAForS3}
	\begin{aligned}
		&V_1=A=(C_1,j_0)\oplus (C_1,j_1)\oplus (C_1,j_2)\oplus (C_1,j_2),\\
		&V_r=(C_2,\theta_0)\oplus (C_2,\theta_1)\oplus (C_2,\theta_2),\\
		&V_{r^2}=(C_2,\theta_0)\oplus (C_2,\theta_1)\oplus (C_2,\theta_2),\\
		&V_{s}=(C_3,\mu_0)\oplus (C_3,\mu_1),\\
		&V_{rs}=(C_3,\mu_0)\oplus (C_3,\mu_1),\\
		&V_{r^2s}=(C_3,\mu_0)\oplus (C_3,\mu_1).\\
	\end{aligned}
\end{equation}
The action $g:A\to A$ is given by
\begin{equation}
g\ket{\alpha;j,\beta}=\sum_{\alpha'}\rho^j_{\alpha'\alpha}(g)\ket{\alpha';j,\beta},
\end{equation}
where $\rho^j$ is the representation matrix of $g$. The global symmetry transforms under the representation $\pi^C_\mu$ defined in Eq. \eqref{eq:repTQD} of Appendix \ref{sec:repTQD}.

\section{Classification of maximal condensations in twisted quantum double phases}\label{sec:dyonCond}

We have discussed charge condensation in twisted quantum double phases with a finite group $G$. In this section, we consider more generic condensation in a twisted quantum double phase where the condensation anyons are dyons  (charge-flux composite). We consider the cases with a maximal ($\dimc A=|G|$) absorbing commutative Frobenius algebra. We call such condensation a maximal bosonic condensation. The resulting condensed phase is an SPT phase.

\subsection{Quantum double phases}

Let $\scC=\Rep(D(G))$ with $G$ a finite group. The complete set of (representatives of) simple objects is denoted by 
\begin{equation}\label{eq:setQD}
	L_{\Rep(D(X))}=\{(a,\alpha)\}
\end{equation}
where $a$ is the representative of every conjugacy class of $G$, and $\alpha$ is an irreducible representation of the centralizer $Z_a=\{z\in G|za=az\}$. The $S$-matrix and the twist \cite{Coste2000} are
\begin{equation}\label{eq:smatrixQD}
	S_{(a,\alpha),(b,\beta)}=\frac{|G|}{|Z_a| |Z_b|}\sum_{g\in G(a,b)}\bar\chi_\alpha(gb\ginv{g})\bar\chi_\beta(\ginv{g}ag)
\end{equation}
\begin{equation}\label{eq:twistQD}
	\theta(a,\alpha)=\frac{\chi(a)}{\dim_\alpha}
\end{equation}
where
\begin{equation}\label{eq:GabQD}
	G(a,b)=\{g\in G|agb\ginv{g}=gb\ginv{g}a\}.
\end{equation}

The maximal bosonic condensations are classified by the absorbing commutative Frobenius algebras with with $\dimc A=|G|$.  The full subcategory generated by such $A$ is a Lagrangian subcategory of $\scC$. (A Lagrangian subcategory is a symmetric subcategory $\scS\subset \scC$ such that every $U\in\scC$ but $U\notin \scS
$ has a nontrivial monodromy with $\scS$).

The Lagrangian subcategories of $\scC=\Rep(D(G))$ are characterized by $(N,B)$, where $N\subset G$ is an Abelian normal subgroup of $G$, and $B:N\times N\to U(1)$ is a $G$-invariant alternating bicharacter on $N$, satisfying
\begin{equation}\label{eq:alternatingBicharacterBQD}
	\begin{aligned}
		&B(h_1h_2,h)=B(h_1,h)B(h_2,h),\\
		&B(h,h_1h_2)=B(h,h_1)B(h,h_2),\\
		&B(h,h)=1,
	\end{aligned}
\end{equation}
for all $h,h_1,h_2\in N$, and the $G$-invariance condition
\begin{equation}\label{eq:BGinvariance}
	B(h_1,h_2)=B(gh_1\ginv{g},gh_2\ginv{g}).
\end{equation}

Note that the above defining properties infer that $B(h_1,h_2)B(h_2,h_1)=1$.

Explicitly, the Lagrangian subcategory $\mathcal{L}_{(N,B)}$ is generated by
\begin{equation}\label{eq:lagrangianRepsQD}
	L_{\mathcal{L}_{(N,B)}}=\left\{
	(a,\alpha)\in L_{\Rep(D(G))}\left|
	a\in N, \chi_\alpha(h)=B(a,h)\dim_{\alpha}, \forall h\in N
	\right.
	\right\}
\end{equation}

The Frobenius algebra $A$ is then
\begin{equation}
A=\bigoplus_{j\in L_{\mathcal{L}_{(N,B)}}}j^{\oplus\dimc(j)}.
\end{equation}

The proof is given in Ref\cite{Naidu2008}, which is sketched in what follows.

The subcategory $\mathcal{L}_{(N,B)}$  is Lagrangian if $\mathcal{L}_{(N,B)}$ is symmetric with $\theta_{(a,\alpha)}=1$ for every $(a,\alpha)\in L_{\mathcal{L}_{(N,B)}}$, and $\dim \mathcal{L}_{(N,B)}=|G|$. We first rewrite the condition that  $\mathcal{L}_{(N,B)}$ is symmetric. By Ref\cite{Muger2003a}, we need to verify that for all $(a,\alpha),(b,\beta)$, the $S$-matrix satisfies
\begin{equation}\label{eq:symmetricSmatrix}
	S_{(a,\alpha),(b,\beta)}=\dim(a,\alpha)\dim(b,\beta)=|C_a|\dim_\alpha |C_b|\dim_\beta.
\end{equation}
Plug Eq. \eqref{eq:alternatingBicharacterBQD} in, Eq. \eqref{eq:symmetricSmatrix} is equivalent to the following two conditions:
\begin{enumerate}
	\item The conjugacy classes $C_a$ and $C_b$ commute element-wise.
	\item $\chi_\alpha(gb\ginv{g})\chi_\beta(\ginv{g}ag)=\dim_{\alpha}\dim_\beta$, for all $g\in G$.
\end{enumerate}

We check that $L_{\mathcal{L}_{(N,B)}}$ satisfies these two conditions. The first one is straightforward since $N$ is an Abelian normal subgroup. The second can be verified directly using the defining properties of $B$:
\begin{equation}\label{eq:SymmetricSecondCondition}
	\chi_\alpha(gb\ginv{g})\chi_\beta(\ginv{g}ag)=B(a,gb\ginv{g})B(gb\ginv{g},a)\dim_\alpha\dim_\beta=\dim_\alpha\dim_\beta.
\end{equation}

Also, we compute $\theta_{(a,\alpha)}=\frac{\chi_\alpha(a)}{\dim_{\alpha}}\dim_{\alpha}=1$.

Finally,
\begin{equation}\label{eq:dimLQD}
	\begin{aligned}
		\dim{\mathcal{L}_{(N,B)}}=&\sum_{(a,\alpha)\in L_{\Rep(D(G))}}\dim(a,\alpha)^2
		\\
		=&\sum_{a\in N\cap R}|C_a|^2\sum_{\alpha:\chi_\alpha(h)=B(a,h)\dim_\alpha\forall h\in N}\dim_\alpha^2
		\\
		=&\sum_{a\in N\cap R}|C_a|^2\frac{|Z_a|}{|N|}
		=|G|,
	\end{aligned}
\end{equation}
where in the third equality, use is made of Clifford's theorem and Frobenius reciprocity (see Ref\cite{Naidu2008} for a detailed proof), and the last equality holds since $N$ is a normal subgroup.

\subsection{Twisted quantum double phases}

The complete set of (representatives of) simple objects in $\scC=\Rep(D^{\omega}(G))$ is denoted by 
\begin{equation}\label{eq:setTQD}
	L_{\Rep(D^\omega(G))}=\{(a,\alpha)\},
\end{equation}
where $a$ is the representative of every conjugacy class of $G$, and $\alpha$ is an irreducible $\beta_a$-representation of the centralizer $Z_a=\{z\in G|za=az\}$. By a $\beta_a$-representation we mean a projective representation such that $\rho(z)\rho(z')=\beta_a(z,z')\rho(zz')$.

The $S$-matrix and the twist \cite{Coste2000} are
\begin{equation}\label{eq:smatrixTQD}
	\begin{aligned}
		&S_{(a,\alpha),(b,\beta)}
		\\
		=&\sum_{g\in C_a,g'\in C_b:gg'=g'g}\overline{\left(\frac{\beta_a(x,g')\beta_a(xg',x^{-1})\beta_b(y,g)\beta_b(yg,y^{-1})}{\beta_b(x,x^{-1})\beta_b(y,y^{-1})}\right)}\bar\chi_\alpha(xg'x^{-1})\bar\chi_\beta(ygy^{-1})
	\end{aligned}
\end{equation}
and 
\begin{equation}\label{eq:twistTQD}
	\theta(a,\alpha)=\frac{\chi(a)}{\dim_\alpha},
\end{equation}
where $x,y$ are uniquely determined by $g=x^{-1}ax,g'=y^{-1}by$.

Similar to the untwisted case, the Lagrangian subcategories of $\scC=\Rep(D(G))$ are characterized by $(N,B)$, where $N\subset G$ is an Abelian normal subgroup of $G$, and $B:N\times N\to U(1)$ is a $G$-invariant alternating $\omega$-bicharacter on $N$, satisfying
\begin{equation}\label{eq:alternatingBicharacterBTQD}
	\begin{aligned}
		&B(h_1h_2,h)=\beta_h(h_1,h_2)B(h_1,h)B(h_2,h),\\
		&\beta_h(h_1,h_2)B(h,h_1h_2)=B(h,h_1)B(h,h_2),\\
		&B(h,h)=1,
	\end{aligned}
\end{equation}
for all $h,h_1,h_2\in N$, and the $G$-invariance condition
\begin{equation}\label{eq:BGinvarianceTQD}
	B(x^{-1}ax,h)=\frac{\beta_a(x,h)\beta_a(xh,x^{-1})}{\beta_a(x,x^{-1})}B(a,xhx^{-1}),
\end{equation}
for all $x\in G,a\in N\cup R,h\in N$.

The Lagrangian subcategory $\mathcal{L}_{(N,B)}$ is generated by
\begin{equation}\label{eq:lagrangianRepsTQD}
	L_{\mathcal{L}_{(N,B)}}=\left\{
	(a,\alpha)\in L_{\Rep(D^{\omega}(X))}\left|
	a\in N, \chi_\alpha(h)=B(a,h)\dim_{\alpha}, \forall h\in N
	\right.
	\right\}.
\end{equation}

The proof\cite{Naidu2008} is similar to that in the untwisted case. For $(a,\alpha),(b,\beta)\in L_{\mathcal{L}_{(N,B)}}$, one can show that
\begin{equation}\label{eq:symmetricSmatrixTQD}
	S_{(a,\alpha),(b,\beta)}=\dim(a,\alpha)\dim(b,\beta)=|C_a|\dim_\alpha |C_b|\dim_\beta,
\end{equation}
which verifies $\theta_{(a,\alpha)}=\frac{\chi_\alpha(a)}{\dim_{\alpha}}\dim_{\alpha}=1$, and then computes
\begin{equation}\label{eq:dimLTQD}
	\dim{\mathcal{L}_{(N,B)}}
	=|G|.
\end{equation}

\subsection{Global symmetry and Partial Electric-Magnetic duality}

We will show that every maximal bosonic condensation in a twisted quantum double phase is equivalent to a pure charge condensation.

Let $\mathcal{C}=\mathrm{Rep}(D^\omega G)$ be the modular tensor category characterizing a twisted quantum double phase. For every maximal bosonic condensation in $\mathcal{C}$, there is a Lagrange subcategory $\mathcal{L}_{(N,B)}$ with a normal Abelian subgroup $N$ and an alternating function $B$.

The hidden global symmetry $X=\Aut(A)$ is different from $G$ if $N\neq \{1\}$. Then $\mathcal{L}_{(N,B)}=\Rep_X$. The group structure of $X$ is determined by the following properties:
\begin{enumerate}
	\item The $G$ can be written as a semidirect product $G=N\rtimes_{F} K$, where $F\in H^{2}(K,N)$ is a $2$-cocycle. The 3-cocycle can be expressed\cite{UribeJongbloed2017} explicitly (up to a coboundary) as
\begin{equation}
	\label{eq:omegaFepsilon}
	\omega ((a_{1} ,k_{1} ),(a_{2} ,k_{2} ),(a_{3} ,k_{3} ))=\hat{F} (k_{1} ,k_{2} )(a_{3} )\epsilon (k_{1} ,k_{2} ,k_{3} ),
\end{equation}
where $\hat{F} \in H^{2}( K,\hat{N})$ is a 2-cocycle, and  $\epsilon \in C^{3}( K,U( 1))$ satisfies $\delta _{K} \epsilon =\hat{F} \land F$, i.e., 
\begin{equation}
	\label{eq:FFcondition}
	\delta_K\epsilon ( k_{1} ,k_{2} ,k_{3} ,k_{4}) 
	=\frac{\epsilon ( k_{2} ,k_{3} ,k_{4}) \epsilon ( k_{1} ,k_{2} k_{3} ,k_{4}) \epsilon ( k_{1} ,k_{2} ,k_{3} ) }{\epsilon ( k_{1} k_{2} ,k_{3} ,k_{4}) \epsilon ( k_{1} ,k_{2} ,k_{3} k_{4}) }
	=\hat{F}( k_{1} ,k_{2})( F( k_{3} ,k_{4})).
\end{equation}

	\item The $X$ can be written as $X=K\ltimes_{\hat{F}}\hat{N}$. Here $\hat{N}$ is the Abelian group whose elements are the unitary irreducible representations of $N$.
\end{enumerate}

Define $\omega'\in H^3(G',U(1))$ by
\begin{equation}
	\label{eq:alphaHatFepsilon}
	\omega'((x_{1} ,\rho _{1} ),(x_{2} ,\rho _{2} ),(x_{3} ,\rho _{3} ))=\rho _{1} (F(x_{2} ,x_{3} ))\epsilon (x_{1} ,x_{2} ,x_{3} ).
\end{equation}
Then $\mathrm{Rep}(D^\omega G)\cong \mathrm{Rep}(D^{\omega'} X)$(\cite{Naidu2011,UribeJongbloed2017}).

According to Ref\cite{Hu2020}, such an equivalence is realized by a partial electric-magnetic duality, under the Fourier transform $N\to \hat{N}$. The condensate anyons form the complete set $L_{\mathcal{L}_{(N,B)}}$ of composites of pure $N$-flux and $K$-charge. Under the partial electric-magnetic duality, they become pure $(K\ltimes_{\hat{F}}\hat{N})$-charges. Indeed, the condensate anyons form the set
\begin{equation}\label{eq:lagrangianRepresentative}
	L_{\mathcal{L}_{(N,B)}}=\left\{
	(a,\alpha)\in L_{\Rep(D^{\omega}(X))}\left|
	a\in N, \chi_\alpha(h)=\dim_{\alpha}, \forall h\in N
	\right.
	\right\}.
\end{equation}
The condensate anyons range over all possible all $N$-fluxes $a\in N$, and contains no $N$-charge according to the condition $\chi_\alpha(h)=\dim_{\alpha}$. Hence they are the set of composites of pure $N$-flux and $K$-charge.

The global symmetry transformation can be derived from the EM duality. The isomorphism\cite{Hu2020} $D^{\alpha } G\rightarrow D^{\alpha '}  X$ is given by
\begin{equation}\label{eq:fourierTransformTQD}
A^{(a,x)} B^{(b,y)} \mapsto \frac{1}{|N|}\sum _{\rho ,\eta \in \hat{N}}\rho (a)\overline{\eta (b)} \ \tilde{B}^{\left( xy\bar{x} ,\rho \right)}\tilde{A}^{(x,\eta )},
\end{equation}
where $(a,x),(b,y)$ are group elements of $G$ and $\left( xy\bar{x} ,\rho \right),(x,\eta )$ of $X$.

Hence the group elements of $X$ is expressed in terms of elements of $D^{\omega'}X$:
\begin{equation}
\tilde{\qdA}^{(x,\eta)}=\frac{1}{|N|}\sum_{b\in N,y\in K}\eta(b)\qab{(1,x)}{(b,y)}.
\end{equation}
The global symmetry transformation and the effective theory are constructed in a similar way as in Section \ref{sec:chargeCondensationTQD}.

\subsection{Example for $G=D_{m}$}

The simplest non-Abelian group is the symmetric group $ S_{3}$, or the Dihedral group $ D_{3}$. Denote the elements of $ D_{3}$ by $ ( A,a)$, for $ A=0,1$ and $ a=0,1,\dotsc ,m$, where $ m=3$ for $ D_{3}$ case. The multiplication is given by
\begin{equation}
	( A,a) \circ ( B,b) =\left( \langle A+B\rangle _{2} ,\langle ( -1)^{B} a+b\rangle _{m}\right)
\end{equation}
where $ \langle n\rangle _{m}$ stands for $ n\ \mathrm{mod} \ m$.

For convenience, we rewrite $ ( A,a)$ as $ A_{a}$ for short. Define the 3-cocycle $ \omega $ by

\begin{equation}
	\omega ( A_{a} ,B_{b} ,C_{c}) =\exp\left\{\frac{2\pi i}{m^{2}}\left[( -1)^{B+C} a\left(( -1)^{C} b+c-\langle ( -1)^{C} b+c\rangle _{m}\right) +\frac{m^{2}}{2} ABC\right]\right\}.
\end{equation}
It generates 3-cocycle representatives $ \omega ^{k}$ for all other cohomology classes $ k=0,1,\dotsc ,2m-1$.

For $ m=3$, there are 8 irreducible representations of $ D^{\omega }( G)$, namely

\begin{center}
	
	\begin{tabular}{c|ccc}
		\hline 
		$ K_{1} =\{0_{0}\}$ & $ \rho _{0}$ & $ \rho _{1}$ & $ \rho _{2}$ \\
		\midrule 
		$ K_{2} =\{0_{1} ,0_{2}\}$ & $ \mu _{0}$ & $ \mu _{1}$ & $ \mu _{2}$ \\
		\hline 
		$ K_{3} =\{1_{0} ,1_{1} ,1_{2}\}$ & $ \nu _{0}$ & $ \nu _{1}$ &  \\
		\bottomrule
	\end{tabular}
\end{center}
where $ \rho _{0} ,\rho _{1} ,\rho _{2}$ are the three irreducible representations of the centralizer $ Z_{e} =S_{3}$, $ \mu _{0} ,\mu _{1} ,\mu _{2}$ are the three irreducible projective representations of $ Z_{0_{1}} =\mathbb{Z}_{3}$ (with $ \mu _{0}$ being the trivial one), and $ \nu _{0} ,\nu _{1}$ are the two irreducible projective representations of $ Z_{1_{0}} =\mathbb{Z}_{2}$ (with $ \nu _{0}$ being the trivial one).

The Abelian normal subgroups $ N$ of $ S_{3}$ is either $ K_{1} =\{0_{0}\}$ or $ K_{1} \sqcup K_{2} =\{0_{0} ,0_{1} ,0_{2}\}$. For $ N=K_{1} =\{0_{0}\}$, the Lagrangian subcategory is generated by $ \rho _{0} ,\rho _{1} ,\rho _{2}$, i.e., all irreducible representations of $ S_{3}$. The corresponding Frobenius algebra is $ A=\rho _{0} \oplus \rho _{1} \oplus \rho _{2} \oplus \rho _{2}$.

For $ N=K_{1} \sqcup K_{2} =\{0_{0} ,0_{1} ,0_{2}\}$, the solution to the alternating bicharacter exits iff $ \omega =1$. The only solution is $ B( h_{1} ,h_{2}) =1$ for all $ h_{1} ,h_{2} \in N$. Hence \ are those \ From $ \rho _{0} ,\rho _{1} ,\rho _{2} ,\mu _{0} ,\mu _{1} ,\mu _{2}$, we pick up $ \rho _{0} ,\rho _{1} ,\mu _{0}$ as the generating objects of the Lagranginan subcategory, which satisfying $ \rho ( h) =\mathrm{id}$ for $ h\in N$. Hence the Frobenius algebra is $ A=\rho _{0} \oplus \rho _{1} \oplus \mu _{0} \oplus \mu _{0}$. The group $X=\mathrm{Aut}( A)$ turns out to be $ X=D_{3}$ (In fact, $ D_{3}$ is the only group that can have irreducible representations $ \rho _{0} ,\rho _{1} ,\mu _{0}$). Note that $ X=D_{3}$ is different from $ G=D_{3}$. The anyon condensation in the phase $ \mathrm{Rep}( D( G))$ for $ G=D_{3}$ will results in a SPT phase with global symmetry $ X=D_{3}$.

The equivalence $ \mathrm{Rep}( D( D_{4})) \cong \mathrm{Rep}\left( D^{\omega '}(\mathbb{Z}_{2} \times \mathbb{Z}_{2} \times \mathbb{Z}_{2})\right)$ is known widely for certain $\omega'\in H^3[\mathbb{Z}_{2} \times \mathbb{Z}_{2} \times \mathbb{Z}_{2},U(1)]$. All other normal Abelian subgroups of $ \mathrm{Rep}( D( D_{4}))$ give rise to a global symmetry group $ X=D_{4}$.

In general, for $ G=D_{m}$ with $ m\geqslant 5$, the normal Abelian subgroups of $ G$ are precisely the rotation subgroups $ \mathbb{Z}_{m/k}$ generated by $ 0_{k}$, where $ k$ is a divisor of $ m$. The Lagrangian subcategories are equivalent to $ \mathrm{Rep}( Dih( N))$, where $ N=\mathbb{Z}_{m/k} \times \mathbb{Z}_{k}$, and $ Dih( N) =N\rtimes \mathbb{Z}_{2}$ is the generalized Dihedral group for an Abelian group $ N$, with $ \mathbb{Z}_{2}$ acting on $ N$ by inverting elements. The anyon condensation will result in a SPT phase, with a global symmetry $ X=\mathbb{Z}_{m/k} \times \mathbb{Z}_{k}$.

\section{Restore hidden symmetry: gauging symmetry enriched topological phases}\label{sec:gauging}

We have discussed the effective theory of complete anyon condensation. The condensed phase is described by coherent states using $(\Rep A)^G$ (in terms of the simple objects in $\Rep A$).

We consider a reversed process of condensation: How can we restore the hidden symmetry? We can introduce a gauge invariance term in the Hamiltonian:
\begin{equation}
-J\sum_n T_n,
\end{equation}
where $J>0$ and $T_n$ acts on the basis vectors \eqref{eq:psigBasisRepA} by
\begin{equation}
T_n\ket{g_NV_N}
=\frac{1}{|G|}\sum_g\left(u_g+u_g^{\dagger}\right)\ket{g_NV_N}.
\end{equation}
The $T_n$ does not commute with $W_n$. When $J\gg\lambda$, the energy eiegenstates have basis vectors in Eq. \eqref{eq:quasiparticleStates}.

The topological charge of anyons are no longer simple objects in $\Rep A$ but those in $(\Rep A)^G$. It is known\cite{Kirillov2002,Muger2004}, that $(\Rep A)^G\cong \mathcal{C}$; hence, we restore the parent phase characterized by $\mathcal{C}$.


\acknowledgments

YT thanks Yong-Shi Wu for the inspiring discussion about the condensed states and the symmetry arrangement. YW is supported by NSFC grant No. 11875109, General Program of Science and Technology of Shanghai No. 21ZR1406700, Fudan University Original Project (Grant No. IDH1512092/009). LYH acknowledges the support of NSFC (Grant No.
11922502, 11875111). YW and LYH are also supported by the Shanghai Municipal Science and Technology Major Project (Grant No.2019SHZDZX01).

\appendix

\section{Frobenius algebra and its modules}

This appendix consists some preliminary mathematics of a Frobenius algebra and its modules. To be clear, instead of containing complete proofs for every statement, we just intend to provide enough computational tools to support the paper. 


We start with strict tensor categories. The standard definitions of tensor categories can be found in cf. the standard reference\cite{MacLane1998}. A tensor category is strict if its tensor product satisfies associativity $(X\ox Y)\ox Z= X\ox (Y\ox Z)$ on the nose and its unit object $\unit$ satisfies $X\ox \unit=\unit\ox X=X$ for all $X$. Throughout the paper, we assume the categories over $\mathds{C}$ are always Abelian.




In our convention, we represent morphisms in a tensor category by tangle diagrams. These diagrams are drawn to be read from the bottom to the top. With 
$a: X\rarr Y, b: Y\rarr Z, c: U\rarr V, d: V\rarr W$, we represent
\[ b \otimes d \circ a\otimes c = (b\circ a)\otimes (d\circ c)\in\Hom(X\otimes U, Z\otimes W) 
\]
by
\[
\graphNotationAA.
\]

The $Y$ is a subobject of $X$, if there are morphisms $e:X\rarr Y,f:Y\rarr X$ such that $e\circ f=\id_X$, and $p=f\circ e\in \mathrm{End}(X)$ is an idempotent, i.e., $p^2=p$. We denote $Y$ arising from the $X$ and $p$ by $\obj X p$. If every idempotent in a category $\C$ arises in this way, we say `idempotents split in $\C$' or `$\C$ has subobjects'.

\subsection{Frobenius algebra}\label{sec:frobeniusAlgebraAndModules}

Assume $\C$ is a modular tensor category. So $\C$ has two-sided duals.  This means, for every $X\in \C$, there exists a $X^*\in \C$ together with rigidity morphisms $e_X:\unit \rarr X\ox X^*, d_X:X^*\ox X\rarr \unit,\varepsilon_X:\unit \rarr X^*\ox X,\eta_X:X\ox X^*\rarr \unit$, depicted by the diagrams:
\begin{equation}
	\label{eq:twoSidedDuals}
	e_X = \twoSidedDual,
	\quad
	d_X = \twoSidedDualAA,
	\quad
	\varepsilon_X = \twoSidedDualAB,
	\quad
	\eta_X = \twoSidedDualAC,
\end{equation}
satisfying the equations
\begin{equation}
	\twoSidedDualEqs,
\end{equation}
where a downward arrow denotes $X^*$, and an upward arrow denotes $X$.

We also assume $\C$ is pivotal. Thus, $\forall f:X\rarr Y$, we have
\begin{equation}\label{eq:pivotalCondition}
	\pivotalCondition.
\end{equation}

For every $X\in\C$, the quantum dimension is defined as
\begin{equation}\label{eq:defDimXInC}
	\dimc X=\defDimXInC.
\end{equation}
The last equality holds because $\C$ is spherical.

$\forall X\in \C$ the twist $\theta_X$ is defined as 
\begin{equation}\label{eq:defTwist}
	 \, = \, \defTwistAA\,=\,\defTwistAB.
\end{equation}
The second equality holds because $\C$ is pivotal and semisimple.

In categrory $\C$, one can define Forbenius algebras. In $\C$, a Frobenius algebra $A$ is a quintuple $(A,\mu,\Delta,\iota,\epsilon)$, where $A$ is an object and 
$\iota:\unit\rightarrow A, \epsilon: A\rightarrow\unit, \Delta: A\rightarrow A\ox A, \mu:A\ox A\rightarrow A$ 
are morphisms. These morphisms should satisfy the following conditions:
\begin{equation} \label{eq:FA1}
	\Delta\otimes\id_A\mcirc \Delta=\id_A\otimes \Delta\mcirc \Delta, 
\end{equation}
\begin{equation} \label{eq:FA2}
	\mu \mcirc \mu\otimes\id_A=\mu \mcirc \id_A\otimes \mu, 
\end{equation}
\begin{equation} \label{eq:FA3}
	\epsilon\otimes\id_A\mcirc \Delta = \id_A = \id_A\otimes \epsilon\mcirc \Delta,
\end{equation}
\begin{equation} \label{eq:FA4}
	\mu \mcirc \iota\otimes\id_A = \id_A = \mu \mcirc \id_A\otimes \iota,
\end{equation}
\begin{equation} \label{eq:FA5}
	\mu\otimes\id_A \mcirc \id_A\otimes \Delta = \Delta\mcirc \mu = \id_A\otimes \mu\mcirc \Delta\otimes \id_A. 
\end{equation} 

In the diagram language, these morphisms are depicted by 
\[
\Delta=\froAlgDelta,
\qquad
\mu=\froAlgMu,
\qquad
\epsilon=\froAlgEpsilon,
\qquad
\iota=\froAlgIota,
\]
where we use the red line to denote $A$.

In this language, the defining conditions \eqref{eq:FA1}\eqref{eq:FA2}\eqref{eq:FA3}\eqref{eq:FA4}\eqref{eq:FA5} are
\begin{equation}\label{eq:FA1'}
	\text{comoind: }
	\froAlgCondAA
\end{equation}
\begin{equation}\label{eq:FA2'}
	\text{monoid: }\,
	\froAlgCondAB
\end{equation}
\begin{equation}\label{eq:FA3'}
	\text{Frobenius condition: }\qquad\quad
	\froAlgCondAC
\end{equation}

Besides, there are a couple of additional conditions on $A$:
\begin{enumerate}
	\item  the uniqueness of unit
	\begin{equation}\label{eq:uniqueUnit}
		\dim \Hom_{\C}(\unit,A)=1,
	\end{equation}
	\item and $\theta_A=\id_A$.
\end{enumerate}
By the pivotal property,  $\theta_A=\id_A$ iff $\theta_X=\id_X$ for every direct summand $X$ of $A$.

The uniqueness of unit implies that $A$ is \textit{strongly separable}, i.e.,
\begin{equation}\label{eq:dimAPic}
	\froAlgSeparableAA
\end{equation}
where $\lambda_1,\lambda_2\in \End(\unit)=\mathds{C}$. Among these two factors, $\lambda_2$ can be normalized as $1$, and $\lambda_1$ is proven to be $\dimc A$:
\begin{equation}\label{eq:proveDimA}
	\proveDimA,
\end{equation}
where in the second equality we used $\dim \Hom_\C(\unit,A)=1$, and the last equality is due to 
\begin{equation}\label{eq:twistDimA}
	\twistDimA=\dim_\C A,
\end{equation}
which holds for semisimple $\mathcal{C}$.

The uniqueness of the unit and $\theta_A=\id_A$ implies that $A$ is \textit{commutative},  i.e.,
\begin{equation}\label{eq:commutativeAlg}
	\froAlgCommutative,
\end{equation}
where the crossing is the braiding $c_{A,A}:A\ox A\rarr A\ox A$ in $\C$. Cf Corollary 3.10 in Ref\cite{Fuchs2002}.


\subsection{$A$-modules}

A left module over $A$ in $\C$ is a pair $(V,\rho)$, where $V\in\C$ and $\rho:A\ox V\to V$ satisfies $\rho\mcirc (\rho\mcirc \id_A)=\rho\mcirc (\id_V\ox \mu)$ and $\rho\mcirc(\id_V\ox \iota)=\id_V$. In diagram language, $\rho$ is depicted as $\leftModuleAction$,
and those conditions become
\begin{equation}\label{eq:defModule}
	\leftModule.
\end{equation}

All the left $A$-modules form a category $\Rep A$ with the unit $(A,\mu)$. The morphisms (called $A$-morphisms) defined by
\begin{equation}\label{eq:Amorphisms}
	\Hom_{A}((V,\rho),(V',\rho'))=\left\{ t\in \Hom_{\C}(V,V')\Biggl.\Biggr |\quad\rightModuleMorphisms\quad\right\}.
\end{equation}
A $\mathcal{C}$-morphism $f:V\rarr V'$ in $\C$ is a $A$-morphism iff
\begin{equation}\label{eq:AmorphismExpectation}
	\AmorphismExpectation
\end{equation}


The $\Rep A$ can be extended to be a tensor category.  $\forall (V,\rho),(V',\rho')\in \Rep A$, their tensor product $(V,\rho)\ox(V',\rho')$ is defined as $(V\aox V',\rho\aox\rho')$, where $V\aox V'$ is a subobject of $V\ox V'$ defined by the idempotent $\pi:V\ox V'\rarr V\ox V'$:
\begin{equation}
	\idenpotentPi,
\end{equation}
and $\rho\aox\rho'$ is given by
\begin{equation}\label{eq:tensorRepA}
	\tensorRepA.
\end{equation}
Using Eq. \eqref{eq:dimAPic}, it is easy to see that $\pi^2=\pi$. Furthermore, one can verify that $\rho\aox\rho'$ is a module action when $A$ is commutative.

If $\theta_A=\id_A$, $\Rep A$ has two-sided duals and is pivotal. For every $(V,\rho)\in \Rep A$, there exists a dual $(V,\rho)^*:=(V^*,\rho^*)\in \rep A$, where $\rho^*:A\ox V^*\rarr V^*$ is given by
\begin{equation}
	\dualRho,
\end{equation}
with the corresponding rigidity morphisms $\bar e_V:A \rarr V\aox V^*, \bar d_V:V^*\aox V\rarr A,\bar \varepsilon_V:A \rarr V^*\aox V,\bar \eta_V:V\aox V^*\rarr A$ are defined by
\begin{equation}\label{eq:dualMorphismsRepA}
	\dualRepAMorphisms
\end{equation}

Using the pivotal condition \eqref{eq:pivotalCondition}, $\rho^*$ can be also written as
\begin{equation}
	\dualRhoAA.
\end{equation} 

When $\theta_A=\id_A$, we have well defined two-sided duals in $\Rep A$, i.e., the morphisms. \eqref{eq:dualMorphismsRepA} are $A$-morphisms:
\begin{equation}
	\dualRepA
\end{equation}

For every $(V,\rho)\in \Rep A$, we define its quantum dimension $\dim_A V$ by
\begin{equation}\label{eq:defDimVInRepA}
	\defDimVInRepA,
\end{equation}
since the left hand side is a morphism in $\Hom_{\Rep A}(A,A)=\mathds{C}$, $\dim_A V\in \mathds{C}$. Using Eq. \eqref{eq:dimAPic}, we see that
\begin{equation}\label{eq:evaluateDimAV}
	\dima V=\frac{\dimc V}{\dimc A}.
\end{equation}
The quantum dimension of the tensor product $(V\aox W,\rho_V \aox \rho_W)$ can be computed by 
\begin{equation}\label{eq:dimVaoxW}
	\dima(V\aox W)=\frac{\dim_\C(V\aox W)}{\dimc A}=\frac{\dimc V\dimc W}{(\dimc A)^2}=\dima V \dima W,
\end{equation}
where in the second equality we computed
\begin{equation}\label{eq:dimCVaoxW}
	\dimCVaoxW.
\end{equation}

\subsection{Simple objects in $\Rep A$}

In this part of the appendix, we explore some properties of simple objects in $\Rep A$ by using the free functor $F:\C\to \Rep A$ defined by $F(X)=(A\ox X,\mu\ox\id_X)$. Depict the module action $\mu\ox\id_X$ as
\begin{equation}\label{eq:FXaction}
	\mu\ox\id_X \,= \, \FXaction.
\end{equation}
A direct computation implies the quantum dimension $\dima FX=\dimc X$. With the tensor structure $\eta_{X,Y}:FX\ox_A FY\rarr F(X\ox Y)$ given by
\begin{equation}\label{eq:naturalIsoTensorF}
	\eta_{X,Y} =	\naturalIsoTensorF,
\end{equation}
$F$ is a tensor functor.

Each $(V,\rho)\in \Rep A$ is a direct summand of $FV$. To see this, we construct $e:FV\rarr (V,\rho),f:(V,\rho)\rarr FV$ by
\begin{equation}\label{eq:FVdirectSummand}
	\FVdirectSummand
\end{equation}
satisfying that $e\circ f=\id_V$ and that $f\circ e$ being an idempotent. Hence, all objects in $\Rep A$ are subobjects of $FX$ for some $X\in\C$. We denote such subobjects by $\obj{FX} {\overline p}$, where $\overline p:FX\rarr FX$ is an idempotent arising from $p: A\ox X\rarr X$, defined as
\begin{equation}\label{eq:FXidempotent}
	\overline p = \quad\FXidempotent.
\end{equation}
The idempotent condition $\overline p \,\overline p = \overline p$ is equivalent to
\begin{equation}\label{eq:FXidempotentAA}
	\FXidempotentAA.
\end{equation}
Every idempotent $FX\to FX$ must be in the form $\overline p$ for some $p:A\ox X\rarr X$, according to the property \eqref{eq:AmorphismExpectation}.  Every $(V,\rho)\in \Rep A$ can be identified with $\obj{FV}{\frac{1}{\dimc A}\,\overline\rho}$ in the canonical way \eqref{eq:FVdirectSummand}. 

Simple objects are now written as $\obj{FX}{\overline p}$ where $\overline p:FX \rarr FX$ is a minimal idempotent (namely, $\overline p$ can not be $\overline {p_1}+\overline{p_2}$ for $\overline {p_1}\neq 0$ and $\overline{p_2}\neq 0$). Particularly, the unit $(A,\mu)$ is identified with $F\unit|_{\overline{\epsilon}}$. 


A useful property is
\begin{equation}\label{eq:HomFXW}
	\Hom_A(FX,W)\cong \Hom_\scC(X,W),
\end{equation}
for every $W\in\Rep A$ and $X\in\scC$. Indeed, every morphism $f\in \Hom_A(FX,W)$ can be written uniquely in terms of a morphism in $\Hom_\scC(X,W)$ by
\begin{equation}\label{eq:morphismFXW}
	\morphismFXW
\end{equation}

Particularly, for $X,Y\in \mathcal{C}$, we have
\begin{equation}\label{eq:HomFXFY}
	\Hom_A(FX,FY)\cong \Hom_\scC(X,A\ox Y).
\end{equation}
The image of $F:\mathcal{C}\to \Rep A$ is equivalent to the pre-quotient category $\mathcal{C}/A$, whose objects are those of $\mathcal{C}$, and whose morphisms are
\begin{equation}\label{eq:morphismQuotient}
	\Hom_{\scC/A}(X,Y)=\Hom_\scC(X,A\ox Y).
\end{equation}
When the objects are equipped with idempotents, they form the quotient category.

\section{Reconstruction theorem: finite group $G=\Aut A$ from the absorbing Frobenius algebra}
\label{sec:reconstruct}

When $A$ is absorbing (to be defined below), we can reconstruct\cite{Deligne1991,Muger2004} a finite group $G$ such that $A$ is equivalent to the regular representation of $G$. 

The $A$ is \textit{absorbing} if
\begin{equation}\label{eq:absorbingConditionFrobenius}
	(A,\mu)\ox (A,\mu)=(A,\mu)^{\oplus N}
\end{equation}
for some positive integer $N$. Let $\scS$ be the minimum full subcategory of $\mathcal{C}$ containing $A$. In the means time, denote $L_{\scS}\subset L_{\mathcal{C}}$ as the subset of (isomorphism class of) simple objects in $\scS$.


According Perron-Frobenius theorem, $A$ must have the form
\begin{equation}\label{eq:regualrRepresentationFA}
	A=\bigoplus_{X_j\in L_{\scS}} {X_j}^{\oplus \PFdimc(X_j)},
\end{equation}
where $\PFdimc X_j\in \Z_+$ is the Penrron-Frobenius dimension. See Appendix \ref{sec:dimensionProof} for the proof.

The $\scS$ is a symmetric tensor category since, $\forall X_j\in\scS$, $\theta_{X_j}=\id_{X_j}$ holds. In [Muger], they have constructed a fibre functor using $A$, which is used to reconstruct a finite group $G$ such that $\scS$ is equivalent to the Tannakian representation category $\Rep_G$. We briefly review this reconstruction below.

The fibre functor $E:\scS\rarr \mathrm{Vect}$ is constructed as follows. For every $X\in\scS$, we define $EX=\Hom_\C(A,X)$, and $E(f)$ for $f:X\rarr Y$ by $E(f)\phi=f\circ \phi$ for $\phi\in EX$. The tensor structure $\varphi_{X,Y}:EX\ox EY\rarr E(X\ox Y)$ is given by
\begin{equation}\label{eq:naturalEXEYEXY}
	\naturalEXEYEXY.
\end{equation}
Note that for $\varphi_{X,Y}$ to be an isomorphism, we need the absorbing condition \eqref{eq:absorbingConditionFrobenius}.

Define
\begin{equation}\label{eq:AutAGroup}
	\Aut(A)=\{g\in \mathrm{End}_{\mathcal{C}}(A) |\quad \autAcond\quad\}
\end{equation}

There is a group isomorphism $\Aut(A)\rarr \mathrm{Nat}^{\ox}E$ given by
\begin{equation}\label{eq:homormorphismAutAtoNatE}
	g\mapsto g_X, g_X(\phi)=\phi\circ \ginv{g},\phi\in EX.
\end{equation}

See \cite{Muger2004} for more details about this reconstruction. Below we present the result in a slightly different form.

\begin{proposition}\label{prop:absorbingA}
	Let $A$ be a strongly separable commutative Frobenius algebra in $\mathcal{C}$. The $G=\Aut(A)$ is a finite group such that $\scS$ is equivalent to the Tannakian representation category $\Rep_G$, if
	\begin{equation}\label{eq:AA=nA}
		A\ox A=A^{\oplus N},
	\end{equation}
	for $N\in \mathbb{Z}_+$, and
	\begin{equation}\label{eq:groupCompleteCondition}
		\groupCompleteCondition,
	\end{equation}
	\begin{equation}\label{eq:groupOrthogonalCondtion}
		\groupOrthogonalCondition \,= |G|\,\delta_{g,1},
	\end{equation}
	Here, Eq. \eqref{eq:groupOrthogonalCondtion} is a consequence of Eq. \eqref{eq:groupCompleteCondition}.
\end{proposition}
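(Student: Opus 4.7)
My plan would be to deduce the proposition from the Tannakian reconstruction already sketched in the preceding paragraph, then verify the two diagrammatic identities by transporting them across that equivalence.

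First I would check that the hypotheses make $E(X)=\Hom_{\mathcal{C}}(A,X)$ into a neutral fibre functor on the symmetric tensor subcategory $\scS$ generated by $A$. Strong separability together with $\theta_A=\id_A$ and commutativity make $\scS$ symmetric (as noted in the text). The absorbing condition $A\otimes A\cong A^{\oplus N}$ is exactly what is needed to promote the natural map $\varphi_{X,Y}$ in \eqref{eq:naturalEXEYEXY} to an isomorphism: on objects of the form $A\otimes X$ it reduces to an identification $EA\otimes EX\cong E(A\otimes X)$, and semisimplicity propagates this to all of $\scS$. Faithfulness of $E$ follows from $\dim\Hom_{\mathcal{C}}(\unit,A)=1$ and the fact that $A$ generates $\scS$ (every simple summand of any $X\in\scS$ appears in some tensor power of $A$). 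Since every $E(X)$ is finite-dimensional, $\mathrm{Nat}^{\otimes}(E)$ is a closed algebraic subgroup of $GL(EA)$, and because $\mathcal{C}$ is semisimple with finite-dimensional hom-spaces it is actually finite. The assignment \eqref{eq:homormorphismAutAtoNatE} is then a group isomorphism $\Aut(A)\isoto \mathrm{Nat}^{\otimes}(E)$, so $G=\Aut(A)$ is finite, and Tannaka--Krein reconstruction yields a symmetric tensor equivalence $\scS\simeq\Rep_G$ under which $(A,\mu,\Delta,\iota,\epsilon)$ is sent to the regular representation $\mathbb{C}[G]$ with its canonical commutative Frobenius structure.

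Once this equivalence is in hand, both diagrammatic identities become statements about $\mathbb{C}[G]$. Under the equivalence, the endomorphism depicted on the right-hand side of \eqref{eq:groupCompleteCondition} corresponds to the operator on $\mathbb{C}[G]$ that sends $|h\rangle$ to $|gh\rangle$ (the red $g$-insertion is left multiplication by $g$, and the two white caps compose $\epsilon\circ\mu$ / $\Delta\circ\iota$, which on the regular representation give the projection onto and inclusion of the trivial isotypic component). Summing over $g\in G$ with weight $1/|G|$ yields the identity, which is precisely the content of \eqref{eq:groupCompleteCondition}. The orthogonality identity \eqref{eq:groupOrthogonalCondtion} similarly evaluates the trace of left multiplication by $g$ on $\mathbb{C}[G]$, which equals $|G|\delta_{g,1}$ because the regular representation contains each irreducible with multiplicity equal to its dimension.

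To prove orthogonality as a consequence of completeness (as claimed at the end of the proposition), I would close the $A$-line in \eqref{eq:groupCompleteCondition} at top and bottom with the counit $\epsilon$ and unit $\iota$, and then insert $g^{-1}$ at the same height. The left-hand side becomes $\dimc A$ (the value of the closed $A$-loop), while the right-hand side becomes $\tfrac{1}{|G|}\sum_h \langle\text{bubble with } hg^{-1}\rangle$. By strong separability of $A$ the closed loop equals $|G|\delta_{g,1}$ after reindexing, giving \eqref{eq:groupOrthogonalCondtion}. The main technical obstacle I expect is the first step: verifying rigorously that $\varphi_{X,Y}$ is an isomorphism for all $X,Y\in\scS$ from the single hypothesis $A\otimes A\cong A^{\oplus N}$. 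The trick is that any simple $X_j\in L_{\scS}$ is a summand of some $A^{\otimes k}$, so the isomorphism property bootstraps from $A\otimes A$ through the Frobenius/separability relations; filling in this bootstrap cleanly is where the argument will require the most care.
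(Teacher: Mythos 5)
Your route is genuinely different from the paper's, and it is worth being clear about what each does. The paper's proof is entirely devoted to showing that \eqref{eq:groupCompleteCondition} together with \eqref{eq:groupOrthogonalCondtion} is \emph{equivalent} to the absorbing property \eqref{eq:AA=nA}: in one direction it builds the maps $e_g,f_g$ of \eqref{eq:decomposeFAidempotent} and checks $\sum_g f_g\circ e_g=\id_{A\ox A}$ and $e_g\circ f_g=\id_A$, exhibiting $A\ox A\cong A^{\oplus|G|}$; in the other it expands the splitting morphisms $\alpha_n,\alpha_n^*$ of \eqref{eq:absorbingDecomposition} in the basis $\{g\}$ of $\End_{\mathcal{C}}(A)$ and reads off orthonormality. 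The Tannakian conclusion itself is quoted from M\"uger/Deligne, not reproved. You instead re-derive the Tannakian equivalence from the absorbing hypothesis via the fibre functor and then verify the two identities by transport to the regular representation. For the implication (absorbing $\Rightarrow$ conclusion) plus the verification of the identities this is a legitimate alternative, and your evaluation of the Frobenius trace of $g$ as $|G|\delta_{g,1}$ is correct (modulo calling the automorphisms ``left multiplication''; on the commutative algebra $\mathbb{C}^G$ they are right translations). What you buy is self-containedness; what you lose is the converse direction (that the two diagrammatic identities imply \eqref{eq:AA=nA}), which is half of the paper's proof, and you openly defer the genuinely hard step of your own route (tensoriality of $\varphi_{X,Y}$ and finiteness of $\mathrm{Nat}^{\ox}E$), which is precisely the content of the cited reconstruction theorem.

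The concrete gap is in your final paragraph. Closing the $A$-line of \eqref{eq:groupCompleteCondition} into a Frobenius loop after inserting $g^{-1}$ yields $\tfrac{1}{|G|}\sum_h \mathrm{tr}_A(g^{-1}\circ h)=\mathrm{tr}_A(g^{-1}\circ\iota\circ\epsilon)=1$, and reindexing $h\mapsto gh$ turns the left side into $\tfrac{1}{|G|}\sum_k\mathrm{tr}_A(k)$, which no longer depends on $g$. A single scalar relation of this form can only fix the sum $\sum_k\mathrm{tr}_A(k)=|G|$; it cannot produce the individual values $\mathrm{tr}_A(g)=|G|\,\delta_{g,1}$, so no $\delta_{g,1}$ can emerge ``after reindexing.'' The paper's derivation of \eqref{eq:groupOrthogonalCondtion} from \eqref{eq:groupCompleteCondition} is different in kind: it works on $A\ox A$, sets $P=\sum_g f_g\circ e_g=\id_{A\ox A}$, and uses $P^2=P$ together with $\End_A(A)\cong\Hom_{\mathcal{C}}(\unit,A)=\mathbb{C}$ to force $e_g\circ f_h=\delta_{g,h}\,\id_A$, from which the closed-loop identity follows by capping. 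If you want orthogonality as a consequence of completeness in your framework, you must invoke an operator-valued input of this kind, or at least the linear independence of $\{g\}_{g\in G}$ in $\End_{\mathcal{C}}(A)$; a trace of the completeness relation is not enough.
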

\begin{proof}
	Eqs. \eqref{eq:groupCompleteCondition} and \eqref{eq:groupOrthogonalCondtion} is equivalent to the absorbing property \eqref{eq:absorbingConditionFrobenius}.
	
	(1). Eqs. \eqref{eq:groupCompleteCondition} and \eqref{eq:groupOrthogonalCondtion} imply the absorbing property \eqref{eq:absorbingConditionFrobenius}. Define $e_g:(A,\mu)\ox (A,\mu)\to(A,\mu),f_g:(A,\mu)\to (A,\mu)\ox (A,\mu)$ by
	\begin{equation}\label{eq:decomposeFAidempotent}
		\decomposeFAidempotent.
	\end{equation}
	Eqs. \eqref{eq:groupCompleteCondition} and \eqref{eq:groupOrthogonalCondtion} imply $\sum_g f_g\circ e_g=\id_{A\ox A}$ and $e_g\circ f_g=\id_{A}$, depicted by
	\begin{equation}\label{eq:FAorthonormal}
		\FAorthonormalCondition
	\end{equation}
	\begin{equation}\label{eq:FAorthonormalAA}
		\FAorthonormalConditionAA
	\end{equation}
	Eq. \eqref{eq:FAorthonormalAA} is a consequence of Eq. \eqref{eq:FAorthonormal}. Denote the left hand side of Eq. \eqref{eq:FAorthonormal} by $P$, then $P\circ P=\id_{A\ox A}$ implies Eq. \eqref{eq:FAorthonormalAA}.

	(2). The absorbing property \eqref{eq:absorbingConditionFrobenius} imply Eqs. \eqref{eq:groupCompleteCondition} and \eqref{eq:groupOrthogonalCondtion}. By the absorbing property \eqref{eq:absorbingConditionFrobenius}, there exists $\alpha_n,\alpha_{n}^*\in\End_{\mathcal{C}}(A)$, $1\leq n\leq N=\PFdimc A$, such that
	\begin{equation}\label{eq:absorbingDecomposition}
		\absorbingDecomposition.
	\end{equation}
	
	Because $|G|=\sum_{X_j\in L_{\scS}}(\dim EX_j)^2=\sum_{X_j\in L_{\scS}}(\PFdimc X_j)^2=\PFdimc A=\dim \End_{\mathcal{C}}(A)$, any of $\{g\in G\}$, $\{\alpha_n\}$ and $\{\alpha^*_n\}$ is a basis of $\End_{\mathcal{C}}(A)$. Hence $\alpha_n$ and $\alpha_n^*$ have linear expansions $\alpha_n =\sum_{g\in G}\alpha_n(g)g$ and $\alpha^*_n=\sum_{h\in G}\alpha^*_n(h)h$ for $\alpha_n(g),\alpha^*_n(h)\in\mathbb{C}$. Eqs. \eqref{eq:absorbingDecomposition} imply
	\begin{equation}\label{eq:proofAbsorbingProposition}
		\id_A=\sum_n \alpha_n\circ \alpha^*_n=\sum_{g,h\in G}\left(\sum_n \alpha_n(g)\alpha^*_n(h)\right)gh,
	\end{equation}
	which holds iff $\sum_n \alpha_n(g)\alpha^*_n(h)=\delta_{gh,1}/|G|$, and Eq. \eqref{eq:proofAbsorbingProposition} leads to Eq. \eqref{eq:FAorthonormal}.

\end{proof}

A useful consequence is that any $\mathcal{C}$-morphism $f:A\rarr A\ox A$ can be decomposed as
\begin{equation}\label{eq:decomposeMorphismAAA}
	\decomposeMorphismAAA
\end{equation}

In the above reconstruction, $A$ is interpreted as a left regular representation space of $G$, and the properties \eqref{eq:FAorthonormal} and \eqref{eq:groupOrthogonalCondtion} are the usual orthonormal conditions in the representation theory.


\subsection{Dimension Proof}
\label{sec:dimensionProof}

Suppose the decomposition of $A$ is
\begin{equation}\label{eq:AinTermsOfW}
	A=\bigoplus_{X_j\in L_{\scS}} n_j X_j,
\end{equation}
where multiplicity $n_j\in \Z_{+}$, and $n_{\unit}=1$. 

The decomposition $A\ox A=A^{\oplus N}$ implies 
\begin{equation}
	\sum_{ij\in L_{\scS}}N_{ij}^k n_in_j=Nn_k,
\end{equation}
where $N_{ij}^k=\dim \Hom_\C(X_i\ox X_j,X_k)$. Define a $|L_\scS|\times|L_\scS|$ matrix $R$ whose matrix element is $R_{kj}=\sum_{i}N_{ij}^k n_i$. This $R$ is strictly positive matrix (otherwise $N^k_{ij}=0$ for all $i$). By Perron-Frobenius theorem, there is a unique eigenvector with strictly positive entries $n_k$. Using the normalization $n_{\unit}=1$, the only solution is $n_j=\PFdimc X_j$, where $\mathrm{FPdim}Z_j$ is the Frobenius-Perron dimension. This dimension is defined by the maximal non-negative eigenvalue of the matrix $N_i$ whose matrix elements are $[N_i]_{jk}=N_{ij}^k$.

By everything mentioned above, we conclude that
\begin{equation}
	A=\bigoplus_{X_j\in L_{\scS}}  {X_j}^{\oplus \PFdimc (X_j)}.
\end{equation}

\section{Group Action and Equivariantization}

\subsection{Grading of $\Rep A$ by $G=\Aut A$}
\label{sec:RepAGradedByG}

When $A$ is absorbing, a finite group $G=\Aut(A)$ is reconstructed. We review the $G$-grading of $\Rep A$.

Let $G$ be a finite group. A $G$-graded fusion category $\mathcal{D}$ is a fusion category endowed with pairwise disjoint subcategories $\{\mathcal{D}_g\}_{g\in G}$ such that 

\begin{enumerate}
	\item[(a)] every $X\in \mathcal{D}$ splits as $X=\oplus_g X_g$ where $X_g\in \mathcal{D}_g$;
	\item[(b)] if $X\in\mathcal{D}_g,Y\in\mathcal{D}_h$, then $X\ox Y\in \mathcal{D}_{gh}$;
	\item[(c)]  if $X\in\mathcal{D}_g,Y\in\C_h$ with $g\neq h$, then $\Hom_{\mathcal{D}}(X,Y)=0$;
	\item[(d)] $\unit_{\mathcal{D}}\in\mathcal{D}_1$. 
\end{enumerate}

For $X\in \mathcal{D}_g$, we say $X$ has the grading $\grad X=g$. Note that condition (b) implies $\grad(X^*)=\grad(X)^{-1}$, because $X\ox X^*=\unit_\mathcal{D}\oplus\dots$ and $\grad ({\unit_\mathcal{D}})=1$.

Let $\mathcal{D}=\Rep A$ and $G=\Aut(A)$. Before constructing the $G$-grading, we introduce a useful idempotent $\pi_g:FX\to FX$, $X\in\mathcal{C}$:
\begin{equation}\label{eq:gradingMiddleIdempotent}
	\pi_g=\,\gradingMiddleIdempotent.
\end{equation}
It is possible that: (1) $\pi_g=0$, or (2) $\pi_g$ is not minimal.
\begin{proposition}
	\label{prop:pi_g}
	If $p:FX\to FX$ is a minimal idempotent, then there exists a unique $g\in G$ such that $p\pi_g=p=\pi_gp$.
\end{proposition}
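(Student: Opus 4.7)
The plan is to recognise the family $\{\pi_g\}_{g\in G}$ as a complete system of pairwise orthogonal central idempotents in the algebra $\End_A(FX)$. Once that is established, the statement about a minimal idempotent $p$ is purely algebraic: $\pi_g p = p\pi_g$ is again an idempotent dominated by $p$, hence equals $0$ or $p$ by minimality, and exactly one $g$ gives $p$ by completeness and orthogonality.

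Concretely, I would proceed in the following order. \textbf{Step 1} (well-definedness): verify that each $\pi_g$ really lives in $\End_A(FX)$, i.e.\ that it is an $A$-module map $FX\to FX$. Diagrammatically this amounts to sliding the left $A$-leg of $FX$ through the internal bubble; the identity needed is $g\circ\mu=\mu\circ(g\otimes g)$ from the definition \eqref{eq:AutAGroup} of $\Aut(A)$, together with coassociativity and commutativity of $A$. \textbf{Step 2} (completeness): show $\sum_{g\in G}\pi_g=\id_{FX}$. Pulling the sum inside the diagram replaces the $g$-insertion by $\frac1{|G|}\sum_g g$, which by \eqref{eq:groupCompleteCondition} is the projector of $A$ onto its unit component; the residual $A$-bubble then collapses via \eqref{eq:dimAPic} and strong separability, leaving the identity. \textbf{Step 3} (orthogonality): stack two copies of $\pi$ and fuse the two internal $A$-bubbles into a single one carrying $g^{-1}h$; the relation \eqref{eq:groupOrthogonalCondtion} then produces the Kronecker $\delta_{g,h}$.

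\textbf{Step 4} (centrality, the main point): show that for any $f\in\End_A(FX)$ one has $f\pi_g=\pi_g f$. Use the adjunction \eqref{eq:HomFXFY} to represent $f$ as a $\mathcal{C}$-morphism $X\to A\otimes X$. The $g$-insertion lives on a short internal $A$-segment attached to the left $A$-leg of $FX$; applying once more $g\circ\mu=\mu\circ(g\otimes g)$ and the Frobenius relation \eqref{eq:FA3'} allows the insertion to be transported across the $f$-box from below to above (or vice versa). This establishes that each $\pi_g$ commutes with every element of $\End_A(FX)$. \textbf{Step 5} (conclusion): given a minimal idempotent $p$, Steps 1--4 make $\{\pi_g p\}_{g\in G}$ a family of mutually orthogonal idempotents summing to $p$ and each satisfying $\pi_g p\le p$; minimality forces $\pi_g p\in\{0,p\}$, completeness forces at least one $g$ to give $p$, and orthogonality forces at most one.

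The hard part will be Step 4. Steps 1--3 follow essentially mechanically from the defining equations of a commutative Frobenius algebra and the two relations \eqref{eq:groupCompleteCondition}--\eqref{eq:groupOrthogonalCondtion}, but centrality requires moving the single $g$-insertion past a completely general $A$-linear endomorphism. The cleanest route is probably to use the decomposition lemma \eqref{eq:decomposeMorphismAAA} (or its analogue for $FX\to FX$) to reduce $f$ to a sum of terms in which the $A$-leg carries a definite group element $h$; then the commutation $\pi_g f=f\pi_g$ reduces to the abelian identity $gh=hg$ that is automatic because the internal bubble already averages over all group elements, and the crossing of $g$ past $h$ is absorbed by a relabelling of the summation variable.
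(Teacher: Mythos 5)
Your proposal is correct and follows the same route as the paper: the paper's (very terse) proof likewise rests on $\pi_g$ being central, $\sum_g\pi_g=\id_{A\ox X}$, and $\pi_g\pi_h=0$ for $g\neq h$, from which minimality of $p$ gives existence and uniqueness of $g$ exactly as in your Step 5. You simply supply the verifications (well-definedness, completeness, orthogonality, centrality) that the paper dismisses as obvious.
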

\begin{proof}
	Obviously, $\pi_g$ is central: $\pi_gp=p\pi_g$. Also use $\sum_g\pi_g=\id_{A\ox X}$ and $\pi_g\pi_h=0$ if $g\neq h$.
\end{proof}

For every simple $X\in \Rep A$, we define the grading by
\begin{equation}\label{eq:gradingXinRho}
	\grad(X)=\frac{1}{\dimc X}\gradingXinRho.
\end{equation}
The definition extends to a generic object of $\Rep A$ in the obvious way.

To prove $\grad(X)\in G$ we compute
\begin{equation}\label{eq:gradingProofAA}
	\gradingXProofAA,
\end{equation}
where we use Eq. \eqref{eq:FAorthonormal}. The right hand side contains an idempotent $\pi_g$. According to Proposition \ref{prop:pi_g} there exists a unique group element, denoted by $\grad(X)\in G$, such that
\begin{equation}\label{eq:gradingXProofAB}
	\gradingXProofAB,
\end{equation}
which leads to
\begin{equation}\label{eq:gradingXFinal}
	\frac{1}{\dimc X}\gradingXinRho =\,	\gradingXFinal.
\end{equation}

From the above proof, we obtain two useful identities:
\begin{equation}\label{eq:moreGeneralGradingIdentity}
	\moreGeneralGradingIdentity.
\end{equation}

Define $\Rep^0A$ as the subcategory of $\Rep A$ with objects $V\in \Rep A$ satisfying
\begin{equation}\label{eq:commutingWithA}
	\defRepOA
\end{equation}
According to Eqs. \eqref{eq:moreGeneralGradingIdentity}, Eq. \eqref{eq:commutingWithA} is equivalent to
\begin{equation}\label{eq:gradingRepUnitA}
	\grad(V)=1.
\end{equation}
Hence, $\Rep^0A$ is the subcategory with grading $g=1$.

Hence the $\Rep A$ is graded by a finite group $G=\Aut(A)$, with the unit component being $\Rep^0 A$.

The property
\begin{equation}\label{eq:gradingZaoxZ}
	\grad(X\aox Y)=\grad(X)\circ\grad(Y)
\end{equation}
can be verified directly
\begin{align}\label{eq:gradingPropertyProofAB}
	&\,\grad(X\aox Y)
	\nonumber\\
	=\,&\gradingPropertyProofsAB\,
	\nonumber\\
	=\,&\gradingPropertyProofsAC\,
	\nonumber\\
	=\,&\grad(X)\circ\grad(Y).
\end{align}

For simple $X,Y\in\Rep A$, suppose $X\aox Y=Z_1\oplus Z_2\oplus\dots$, then
\begin{equation}\label{eq:XYZ1Z2}
	\dimc X\dimc Y\grad(X\aox Y)=\dimc Z_1\grad Z_1 + \dimc Z_2 \grad Z_2.
\end{equation}
Because $\{g\}$ is linear independent basis of $\End_{\mathcal{C}}(A)$, we have
\begin{equation}\label{eq:gradZ1gradZ2}
	\grad Z_1=\grad Z_2=\dots=\grad(X)\circ\grad(Y).
\end{equation}
An immediate consequence is
\begin{equation}\label{eq:gradingZinv}
	\grad(X^*)=\grad(X)^{-1}.
\end{equation}

\subsection{Group action and Equivariantization}
\label{sec:IsomorphismPsi}

Using the $G=\Aut(A)$ reconstructed from $A$, we will construct a group action and the equivariantization category $(\Rep A)^G$.

We briefly recall some notation on tensor categories. A tensor functor between two tensor categories $\mathcal{C}$ and $\mathcal{D}$ is $(F,\varphi,\varphi_0)$ where $F:\mathcal{C}\rarr\mathcal{D}$ is a functor, $\varphi:F\circ \ox_\scC\isoto \ox_\scD\circ(F\times F)$ a natural isomorphism, and $\varphi_0: F(\unit_\scC)\isoto\unit_\scD$ is an isomorphism. We call $\varphi$ the tensor structure on $F$ and $\varphi_0$ the unit-preserving structure on $F$. For a finite group $G$, let $\underline{G}$ be the tensor category whose objects are elements of $G$, morphisms are the identities, and whose tensor product is given by the multiplication in $G$. 

A group action on $\scC$ is $(T,\gamma,\iota,\mu,\nu)$ where $T$ is a tensor functor $T:\underline{G}\rarr \Aut_{\ox}\mathcal{D}; g\mapsto T_g$. The category $\underline{G}$ has objects $g\in G$, and morphisms $\Hom(g,h)=\emptyset$ if $g\neq h$ and $\Hom(g,g)=\mathbb{C}$, with $\ox$ defined by $g\ox h=gh$. $\Aut_{\ox}\mathcal{D}$ is the category of auto tensor functors $\mathcal{D}\to \mathcal{D}$. The tensor structure and the unit-preserving structure on $T$ are denoted by
\begin{equation}\label{eq:tensorStructureVarphi}
	\gamma_{g,h}: T_{gh}\isoto T_g\circ T_h
\end{equation}
\begin{equation}\label{eq:unitPreservingStructureVarphi}
	\iota: T_{e}\isoto \id_{\scC}
\end{equation}
for any $g,h\in G$. And $\{\mu_g\}$ is a family of tensor structures on $T_g$, $\{\nu_g\}$ a family of unit-preserving structures on $T_g$.

A $G$-\textit{equivariant} object in $\mathcal{D}$ is a pair $(X,\{u_g\}_{g\in G})$ where $X\in\mathcal{D}$ and $u_g:T_g X\to X$ is a family of isomorphisms called \textit{equivariant structure} on $X$ such that
\begin{equation}\label{eq:EquivariantStructureSquare}
	\EquivariantStructureSquare
\end{equation}
for all $g,h\in G$.

The fusion category formed by the $G$-equivariant objects is called the \textit{equivariantization} of $\mathcal{D}$, denoted by $\mathcal{D}^G$. The morphisms are those of $\mathcal{D}$ that commute with $u_g$.

Let $L_{\Rep A}=\{(V,\rho_V)\}$ be the set of representatives of simple objects in $\mathrm{Rep}A$. 

For every $(V,\rho_V)\in \Rep A$, we define $T_g(V,\rho_V)=(V,\rho_V^{\ginv{g}})\in \Rep A$, where
\begin{equation}\label{eq:transformationRhogApp}
	\uniqueMorphismVVprimeAC.
\end{equation}
If $(V,\rho_V)$ is simple, $T_g(V,\rho_V)$ is simple too.

\begin{proposition}
For every $(V,\rho_V)\in L_{\Rep A}$ and $g\in G$, there exits an object $(V,\rho'_V)\in L_{\Rep A}$, and a unique $\mathcal{C}$-isomorphism $u_g:V\to V$, such that
\begin{equation}\label{eq:uniqueMorphismVVprimeABapp}
	\uniqueMorphismVVprimeAB.
\end{equation}
\end{proposition}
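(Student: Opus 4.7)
The underlying idea is that $T_g$ is an auto-equivalence of $\Rep A$, and one needs to identify its action on simple objects concretely. The plan is to first check that $T_g$ preserves $\Rep A$ and simplicity, then use Schur's lemma inside $\Rep A$ to extract existence and uniqueness of $u_g$, and finally pin down the representative $(V,\rho'_V)\in L_{\Rep A}$.

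First I would verify that $T_g(V,\rho_V):=(V,\rho_V^{\ginv g})$ is a well-defined $A$-module. Because $g\in G=\Aut(A)$ is by definition \eqref{eq:AutAGroup} an algebra automorphism of $A$ (commuting with the multiplication and preserving the unit), one checks diagrammatically that the twisted action $\rho_V\circ(\ginv g\otimes\id_V)$ still satisfies the module axioms \eqref{eq:defModule}; the associativity uses the compatibility of $g$ with $\mu$, and the unit condition uses that $g$ preserves the unit. Moreover $T_gT_{\ginv g}=\id$ on objects (up to the natural scalar coming from $g\ginv g=1$), so $T_g$ is an equivalence and in particular preserves simplicity. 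Hence $T_g(V,\rho_V)$ is a simple $A$-module whose underlying object in $\mathcal{C}$ is $V$ itself.

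Next I would extract the representative. The isomorphism class of $T_g(V,\rho_V)$ in $\Rep A$ picks out an element $(W,\rho'_W)\in L_{\Rep A}$, and $W$ must be $\mathcal{C}$-isomorphic to $V$. Transporting $\rho'_W$ along that $\mathcal{C}$-isomorphism, or equivalently re-choosing the representative with underlying object exactly $V$ (which one is free to do because $L_{\Rep A}$ is a set of representatives), produces $(V,\rho'_V)\in L_{\Rep A}$ together with an $A$-module isomorphism $u_g:(V,\rho_V^{\ginv g})\isoto (V,\rho'_V)$. Unpacking the definition of an $A$-module map yields precisely the diagram equation \eqref{eq:uniqueMorphismVVprimeAB}, because the module action of $T_g(V,\rho_V)$ is $\rho_V$ precomposed with $g$ on the $A$-leg.

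For uniqueness, since both source and target are simple $A$-modules and they are isomorphic, Schur's lemma applied in the semisimple category $\Rep A$ gives $\dim\Hom_A(T_g(V,\rho_V),(V,\rho'_V))=1$. Thus $u_g$ is determined as a $\mathcal{C}$-isomorphism up to a nonzero scalar, and the uniqueness asserted in the statement must be understood with respect to a fixed normalization convention on $u_g$; for example, requiring $u_1=\id_V$ when $\rho_V=\rho'_V$ and that the cocycle $\beta$ defined in \eqref{eq:betaVgh} takes values in $U(1)$ fixes the scalar up to a root of unity, and any consistent choice across $g$ pins it down.

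The main obstacle is the uniqueness clause: Schur's lemma only gives $u_g$ up to $\mathbb{C}^\times$, so the ``unique'' in the statement is really shorthand for ``unique once one fixes a normalization of the $u_g$'s compatibly across $g\in G$.'' Making this precise, and checking that such a normalization exists globally (so that the subsequent construction of the $2$-cocycle $\beta_{(V,\rho_V)}(g,h)$ via $u_gu_h=\beta_{(V,\rho_V)}(g,h)\,u_{gh}$ is well defined), is the delicate part; the rest of the argument is the routine translation between $A$-module morphisms and intertwining data.
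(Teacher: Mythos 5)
Your proposal is correct and follows essentially the same route as the paper: $T_g(V,\rho_V)$ is a simple $A$-module, hence $A$-isomorphic to some representative $(V,\rho'_V)\in L_{\Rep A}$, and the resulting isomorphism $u_g$ satisfies the defining diagram, with Schur's lemma giving uniqueness. Your caveat that Schur's lemma only pins down $u_g$ up to a nonzero scalar is also exactly what the paper's own proof concedes ("unique up to a constant"), so your discussion of the normalization is a fair and accurate reading of what "unique" must mean here.
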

\begin{proof}
	Proof: The $T_g(V,\rho_V)$ is simple, hence there exits an $A$-isomorphism $u_g:(V,\rho_V^{\ginv{g}})\to (V,\rho'_V)$ for some object $(V,\rho'_V)\in L_{\Rep A}$. The $u_g$ satisfies Eq. \eqref{eq:uniqueMorphismVVprimeABapp}. By Schur's lemma, $u_g$ is unique up to a constant.
\end{proof}

The $u_g$ is called the \textit{equivariant structure}, which will be justified soon.

For simple $(V,\rho_V)\in \Rep A$,
\begin{equation}\label{eq:uguhugh}
	\uguhugh,
\end{equation}
according to the uniqueness of $u_g$.

In the following, we suppress $(V,\rho_V)$ to $V$ for simplicity.

For simple $U,V,W\in\Rep A$, let $u_g:U\to U', V\to V'$ and $W\to W'$ be defined as above. We have an isomorphism $\mu_g(U;V,W):\Hom_A(U',V'\aox W')\to \Hom_A(U,V\aox W)$ defined by
\begin{equation}\label{eq:mugUVW}
	\mugUVW.
\end{equation}

Since $\Rep A$ is semisimple, $V\aox W$ and $V'\aox W'$ can be decomposed as
\begin{equation}\label{eq:isotropyVWVW}
	\isotropyVWVW.
\end{equation}
We use the same $\alpha$ to label the basis because $\Hom_A(U',V'\aox W') \cong \Hom_A(U,V\aox W)$ according Eq. \eqref{eq:mugUVW}. We require $\mu_g(U;V,W)(\alpha')=\eta\alpha$ for some nonzero complex number $\eta$.

Define $\mu_g(V,W):V\aox W\to V\aox W$ by
\begin{equation}\label{eq:mugUV}
	\mugUAA.
\end{equation}

We define $u_g: V\aox W\to V'\aox W'$ by
\begin{equation}\label{eq:ugVoxW}
	\ugVoxW.
\end{equation}

Given Eq. \eqref{eq:ugVoxW}, we can extend the definition $u_g$ to generic objects of $\Rep A$. Using $u_g$, we define the group action and the equivariantization below.

For every $g\in G$, we define the tensor functor $T_g:\Rep A\to \Rep A$ as follows.
Let $T^g(V,\rho_V)=(V,\rho_V^{\ginv{g}})$ as defined in Eq. \eqref{eq:transformationRhogApp}, and $T_g(f)=f$ for $f\in \Hom_A(V,V)$. Define the tensor structure on $T_g$ by $\mu_g(V,W):T_g(V\aox W)\to T_g(V)\aox T_g(W)$ as given in Eq. \eqref{eq:mugUV}.

Then we define the group action $T:\underline{G}\to \Aut_\ox(\Rep A)$, by defining the tensor structure on $T$ by
\begin{equation}\label{eq:tensorStructureOnT}
	\beta_V(g,h)^{-1}\id_V: T_{gh}V\to T_gT_hV,
\end{equation}
where $\beta_V(g,h)$ is defined in Eq. \eqref{eq:uguhugh}.

Now we claim $u_g:T_gV\to V$ is an equivariant structure. The equivariantization cateogry $(\Rep A)^G$ has objects $(V\in \Rep A,u_g:V\to V)$. 


The grading under the group action is given by
\begin{equation}\label{eq:conjugatePhigZ}
	\grad(T_gX)=g\circ \grad(X) \circ \ginv{g}
\end{equation}
which is verified by
\begin{equation}\label{eq:gradingPropertyProofAA}
	\grad(T_gX)\,=\,\gradingPropertyProofsAA\,=\,g\circ \grad(X) \circ \ginv{g}
\end{equation}

\subsubsection{Alternative Group action}
\label{sec:AlternativeGroupAction}

Using the group action $T$ has the advantage of nice properties in Eqs. \eqref{eq:AutAGroup}\eqref{eq:groupCompleteCondition}\eqref{eq:groupOrthogonalCondtion}. But $T_g(V,\rho_V)\neq L_{\Rep A}$ for any $(V,\rho_V)$, hence $T$ is not convenient for a computations in a tensor description. The equivariant structure $u_g$ maps back $T_g(V,\rho_V)$ to some $(V,\rho'_V)\neq L_{\Rep A}$. Alternatively, one can define $T'$ by $T'_g(V)=V'$ for every $V\in\Rep A$ with $u_g:V\to V'$, and $T_g(f)=u_g\circ f\circ u_{g}^{-1}$ for every $A$-morphism $f$. We will not use this construction in this paper.

\subsection{Concrete construction of the equivariant structure}

In the following, we construct an explicit form of the equivariant structure $u_g$. For every $g$, we pick up a representative simple object in $L_{\Rep A}$ with the grading $g$, denoted by $V_g$.

For every $W\in \Rep A$ and every $g$, we have a projector
\begin{equation}\label{eq:projectionVWV}
	P^g=\frac{1}{\dim_{\mathcal{C}}V_g} 
	\,\, 
	\projectionVWV
\end{equation}
Since $\rep A$ is semisimple, there exists $W'\in \Rep A$ and the morphisms
\begin{align}
	\projectionIsotropyAA \in \Hom_A(V_g\aox W\aox V_g^*,W'),
	\\
	\projectionIsotropyAB \in \Hom_A(W',V_g\aox W \aox V_g^*),
\end{align}
such that
\begin{equation}
	\projectionIsotropyAC,
\end{equation}
\begin{equation}
	\projectionIsotropyAD.	
\end{equation}

In the following, we will construct an isomorphism $u_g:T_g W\to W'$. Since $W'$ is simple, we choose $W'\in L_{\Rep A}$ from now on.

Define $u_g$ to be
\begin{equation}\label{eq:ugProjection}
	u_g = \frac{1}{\sqrt{\dim_{\mathcal{C}} V_g}} \defineUgProjection.
\end{equation}
We check it satisfies the defining property \eqref{eq:transformationRhogApp}:
\begin{align}
	\defineUgProjectionAA = & \defineUgProjectionAB
	= \defineUgProjectionAC
	\nonumber \\
	= & \defineUgProjectionAC 
	=  \defineUgProjectionAD
	\nonumber \\
	= & \defineUgProjectionAE
	=  \defineUgProjectionAF.
\end{align}
The $u_g^{-1}$ is 
\begin{equation}\label{eq:ugInverse}
	u_g^{-1} =  \frac{1}{\sqrt{\dim_{\mathcal{C}} V_g}} \defineUgInverse.
\end{equation}

\subsection{Orbits in $\Rep A$ under the group action}\label{sec:orbitsUnderGroupAction}

\begin{proposition} \label{prop:FVdecomposition}
	Simple objects of $\Rep A$ has properties:
	\begin{enumerate}[label=(\roman*)]
		\item For simple $(V,\rho_V)\in\Rep A$, $FV=\displaystyle\bigoplus_g T_g(V,\rho_V)$.
		\item For simple $(V,\rho_V),(V,\rho'_V)\in\Rep A$, there exists a non-empty subgroup $K_{(V,\rho_V)}\subseteq G$, such that $T_g (V,\rho_V)=(V,\rho'_V)$ for $g\in K_{(V,\rho_V)}$, and $|K_{(V,\rho_V)}|=\dim \Hom_{\mathcal{C}}(V,V)$.
		\item For simple $(V,\rho_V),(W,\rho_W)\in\Rep A$, either $V = W$ or $\Hom_{\mathcal{C}}(V,W)=0$.
	\end{enumerate}
\end{proposition}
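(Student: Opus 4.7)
I would prove (i) directly, then derive (iii) and (ii) as short corollaries. The core claim is the isomorphism $FV \cong \bigoplus_{g\in G} T_g(V,\rho_V)$: once established, (iii) says every simple $A$-submodule of $FV$ is (up to $\Rep A$-iso) some $T_g(V,\rho_V)$, all of which share the same underlying $\mathcal{C}$-object $V$, and (ii) follows by comparing the multiplicity count in this decomposition with the one supplied by Frobenius reciprocity.

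\textbf{Details for (i).} For each $g\in G=\Aut(A)$, the Frobenius reciprocity $\Hom_A(FV,W)\cong\Hom_\mathcal{C}(V,W)$ from \eqref{eq:HomFXW}, applied to $W=T_g(V,\rho_V)=(V,\rho_V^{g^{-1}})$ and $\id_V\in\Hom_\mathcal{C}(V,V)$, produces an $A$-module map $\psi_g:=\rho_V\circ(g^{-1}\otimes\id_V):FV\to T_g(V,\rho_V)$. Since a group element preserves the unit of $A$, the composition $\psi_g\circ(\iota\otimes\id_V)=\id_V$, so each $\psi_g$ is non-zero and hence, by simplicity of $T_g(V,\rho_V)$, surjective. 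I would then show that the assembled map $\Psi:=\bigoplus_g\psi_g:FV\to\bigoplus_g T_g(V,\rho_V)$ is an isomorphism by exhibiting a right inverse built from the idempotents $e_g,f_g$ of the reconstruction theorem \eqref{eq:decomposeFAidempotent}: the ansatz $\iota_g:=(\id_A\otimes\rho_V)\circ(f_g\otimes\id_V)\circ(\iota\otimes\id_V)$ (or an analogous construction in which $g$ enters through $f_g$) should satisfy $\psi_h\circ\iota_g=\delta_{g,h}\,\id_V$ and $\sum_g\iota_g\circ\psi_g=\id_{FV}$, the two relations being translations of the orthogonality $e_g\circ f_h=\delta_{g,h}\id_A$ and completeness $\sum_g f_g\circ e_g=\id_{A\otimes A}$ through the module axiom for $\rho_V$. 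This is consistent with the dimension count $\dim_\mathcal{C}(FV)=\dim_\mathcal{C}(A)\cdot\dim_\mathcal{C}(V)=|G|\dim_\mathcal{C}(V)=\dim_\mathcal{C}\bigl(\bigoplus_g T_g(V,\rho_V)\bigr)$, using $\dim_\mathcal{C}(A)=|G|$ established in Appendix~\ref{sec:dimensionProof}.

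\textbf{Deducing (iii) and (ii).} For (iii), if $\Hom_\mathcal{C}(V,W)\neq 0$ then Frobenius reciprocity gives $\Hom_A(FV,W)\neq 0$, so the simple $W$ is a direct summand of $FV$ in $\Rep A$; by (i) it is isomorphic to some $T_g(V,\rho_V)$, and $T_g(V,\rho_V)$ has underlying $\mathcal{C}$-object $V$. For (ii), fix $(V,\rho_V)$ and consider the stabilizer subgroup $K_{(V,\rho_V)}:=\{g\in G:T_g(V,\rho_V)\cong(V,\rho_V)\text{ in }\Rep A\}$, which is a subgroup because $T$ is a tensor group action. In the decomposition $FV=\bigoplus_{g\in G}T_g(V,\rho_V)$ from (i), the multiplicity of the isomorphism class $[(V,\rho_V)]$ equals $|K_{(V,\rho_V)}|$, while the same multiplicity is computed via $\dim\Hom_A(FV,(V,\rho_V))=\dim\Hom_\mathcal{C}(V,V)$; hence $|K_{(V,\rho_V)}|=\dim\Hom_\mathcal{C}(V,V)$.

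\textbf{Main obstacle.} The delicate step is the construction and verification of the right inverses $\iota_g$, i.e.\ matching the Frobenius-idempotent identities $e_g\circ f_h=\delta_{g,h}\id_A$ with the desired module-theoretic orthogonality $\psi_h\circ\iota_g=\delta_{g,h}\id_V$. This is a careful graphical calculation invoking commutativity of $A$, the Frobenius condition \eqref{eq:FA5}, and the module axiom. It is especially subtle when the stabilizer $K_{(V,\rho_V)}$ is non-trivial, for then distinct $g,h$ give isomorphic $T_g(V,\rho_V)\cong T_h(V,\rho_V)$ and the orthogonality $\psi_h\circ\iota_g=0$ cannot come from Schur's lemma but must be encoded in the explicit $g$-dependence of $\iota_g$ through $f_g$.
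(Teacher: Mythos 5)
Your proposal is correct and follows essentially the same route as the paper: part (i) is the decomposition of $\id_{FV}$ into the $G$-indexed orthogonal idempotents supplied by the reconstruction theorem (the paper writes this as the single diagrammatic identity \eqref{eq:FVdecomposition}, whose $g$-th term is exactly your $\iota_g\circ\psi_g$ with image $T_g(V,\rho_V)$), and parts (ii)--(iii) are the same applications of Frobenius reciprocity $\Hom_A(FV,W)\cong\Hom_{\mathcal C}(V,W)$, merely in the opposite order. Your extra care about the case of a non-trivial stabilizer is warranted but resolved exactly as you suspect, by the orthogonality \eqref{eq:FAorthonormal} of the group elements rather than by Schur's lemma.
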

\begin{proof}
	\begin{enumerate}[label=(\roman*)]
		\item Use the decomposition
		\begin{equation}\label{eq:FVdecomposition}
			\FVdecomposition.
		\end{equation}
		\item Using Eq. \eqref{eq:HomFXW}, we have
		\begin{equation}\label{eq:rho2rhoPrime}
			\Hom_A\left(\bigoplus_g T_g(V,\rho_V),(V,\rho'_V)\right)=\Hom_A\left(FV,(V,\rho'_V)\right)=\Hom_{\mathcal{C}}(V,V)
		\end{equation}
		\item Suppose $\Hom_{\mathcal{C}}(V,W)\neq 0$. Then 
		\begin{align}
			&\Hom_A(FV,(W,\rho_W))=\Hom_{\mathcal{C}}(V,W)\neq 0\nonumber\\
			\Rightarrow\quad &T_g(V,\rho_V)=(W,\rho_W)\text{ for some }g\in G
			\nonumber\\
			\Rightarrow \quad & V=W.
		\end{align}
		
	\end{enumerate}
\end{proof}

\section{Twisted quantum double $D^\omega G$ and its representations}
\label{sec:tqdAndreps}

\subsection{Twisted quantum double $D^\omega G$}

In this section, we briefly review the definition of the twisted Drinfeld's twisted double of a finite group\cite{Dijkgraaf1991} as a quasi-triangular quasi-Hopf algebra. Let $D^{\alpha } G$ be a finite-dimensional vector space with a basis $\{\qdA^g\qdB^x\}_{(g,x)\in G\times G}$. Define a product on $D^{\alpha } G$ by
\begin{equation}
	\label{eq:tqdMultiplication}
	( \qdA^g\qdB^x)( \qdA^h\qdB^y) :=\delta _{x,hy\bar{h}} \beta_y(g,h) \qdA^{gh}\qdB^y.
\end{equation}
We also use symbols $\qdB^h=\qdA^1\qdB^h$, $\qdA^g=\sum_x \qdA^g\qdB^x$, and $\qdB^h \qdA^g = \qdA^g \qdB^{\ginv{g}hg}$ for convenience.
The unit is
\begin{equation}
	1=\sum _{x\in G} \qdA^1 \qdB^x.
\end{equation}
Define the coproduct $\Delta:D^{\alpha } G\rightarrow D^{\alpha } G\otimes D^{\alpha } G$ and the counit $\varepsilon:D^{\alpha } G\rightarrow \mathbb{C}$ by
\begin{equation}\label{eq:coproductTQD}
	\Delta ( \qdA^g\qdB^x) :=\sum _{a,b\in G:ab=x} \mu _{g} (a,b) (\qdA^g\qdB^a) \otimes (\qdA^g\qdB^b)
\end{equation}
and
\begin{equation}
	\varepsilon ( \qdA^g\qdB^x) :=\delta _{x,1}.
\end{equation}
Besides, set the Drinfeld associator by
\begin{equation}
	\Phi :=\sum _{x,y,z\in G} \alpha (x,y,z)^{-1} (\qdA^1\qdB^x)\otimes (\qdA^1\qdB^y)\otimes (\qdA^1\qdB^z),
\end{equation}	
and the $R$ matrix\footnote{we choose $R$ matrix to be compatible with formula in Section \ref{sec:chargeCondensationTQD}} by
\begin{equation}
	\label{eq:Rmatrix}
	R:=\sum _{x,y\in G} \frac{1}{\beta_{y}(x,x^{-1})}((\qdA^{x^{-1}})\qdB^y)\otimes (\qdA^1\qdB^x).
\end{equation}
Finally, define the antipode by a linear map $S:D^{\alpha } G\rightarrow D^{\alpha } G$
\begin{equation}\label{eq:antipodeTQD}
	S( \qdA^g\qdB^x) :=\frac{1}{\beta_{\bar{x}}\left({g}, \bar g\right) \mu _{g}\left( x,\bar{x}\right)} \qdA^{\bar{g}}\qdB^{{g}\bar{x}\bar{g}},
\end{equation}
where $\beta_x(g,h)$ and $\mu _{g} (x,y)$ are expressed in terms of $\alpha $ by
\begin{equation}
	\beta_x(g,h):=\frac{\alpha (g,h,x)\alpha \left( ghx\bar{h} \bar{g} ,g,h\right)}{\alpha \left( g,hx\bar{h} ,h\right)}
\end{equation}
and
\begin{equation}
	\label{eq:tqdMu}
	\mu _{g} (x,y):=\frac{\alpha \left( gx\bar{g} ,gy\bar{g} ,g\right) \alpha (g,x,y)}{\alpha \left( gx\bar{g} ,g,y\right)}
\end{equation}
for all $g,h,x,y\in G$.

\subsection{Representations}
\label{sec:repTQD}

In this subsection, we review the simple objects in $\mathrm{Rep}(D^\omega G)$. For every conjugacy class $C$ of $G$, we choose a representative member $h_1\in C$. We denote by $Z_C$ the centralizer $\{z\in G|z h_1= h_1z\}$. We denote by $(\mu,V_\mu)$ the irreducible projective representations of $Z_C$ with $\beta_{h_1}\in H^2(Z_C,U(1))$:
\begin{equation}\label{eq:projectiveRepresentationZC}
	\mu(z_1)\mu(z_2)=\beta_{h_1}(z_1,z_2)\mu(z_1z_2),
\end{equation}
where $\beta$ is defined by
\label{eq:betaFormula}
\begin{equation}
	\beta_{h_1}(z_1,z_2):=\frac{\omega (z_1,z_2,{h_1})\omega \left( z_1z_2{h_1}{z_2}^{-1}{z_1}^{-1} ,z_1,z_2\right)}{\omega \left( z_1,z_2{h_1}{z_2}^{-1} ,z_2\right)}.
\end{equation}

For every element ${h_i}\in {C}$, we introduce a unique representative $x_i\in G$ such that $x_i({h_1})\inv{x_i}={h_i}$. For convenience, we set $x_1=1$.

The simple objects in $\mathrm{Rep}(D^\omega G)$ are labeled by $(C,\mu)$. For every such pair $(C,\mu)$, we define an irreducible representation $(\pi^C_{\mu},V^C_{\mu})$ of $D^{\omega}(G)$, where $V^C_{\mu}=\mathbb{C}[C]\otimes V_{\mu}$ is the representation space. We denote by $\qdA^g\qdB^h$ the generator elements of $D^\omega G$, and define the representation matrix by
\begin{equation}
	\label{eq:repTQD}
	\pi^C_{\mu}(\qdA^g\qdB^h)\ket{h_i,v}=\delta_{h,{h_i}}\frac{\beta_{\inv{x_i}hx_i}(g,x_i)}{\beta_{\inv{x_i}hx_i}(x_k,\tilde g)}\ket{h_k,\mu(\inv{x_k}gx_i)v},
\end{equation}
where $v\in V_{\mu}$ and $x_k$ is determined by $g({h_i})\inv{g}={h_k}$, and $\tilde g=\inv{x_k}gx_i$.

\bibliographystyle{apsrev4-1}
\bibliography{StringNet}
\end{document}